\documentclass[a4paper,USenglish,cleveref,autoref,thm-restate,numberwithinsect,nameinlink]{lipics-v2021}

\hideLIPIcs
\nolinenumbers

\usepackage[utf8]{inputenc}
\usepackage{amsmath}
\usepackage{amssymb}
\usepackage{enumerate}
\usepackage{dsfont}
\usepackage{thmtools}
\usepackage{mathtools}
\usepackage{thm-restate}
\usepackage{multirow}
\usepackage{xcolor}
\usepackage{tikz}
\usetikzlibrary{shapes}
\usepackage{todonotes}
\usepackage{comment}
\usepackage{placeins}
\usepackage{nicefrac}
\usepackage{tabularx}
\usepackage{algorithm2e}
\usepackage{etoolbox}
\usepackage{thm-restate}

\hypersetup{colorlinks = true, citecolor=blue}

\usepackage{environ}
\makeatletter
\newsavebox{\measure@tikzpicture}
\NewEnviron{scaletikzpicturetowidth}[1]{%
  \def\tikz@width{#1}%
  \begin{lrbox}{\measure@tikzpicture}%
  \BODY
  \end{lrbox}%
  \pgfmathparse{#1/\wd\measure@tikzpicture}%
  \BODY
}
\makeatother

%\usepackage[colorlinks = true,
%            linkcolor = blue,
%            urlcolor  = blue,
%            citecolor = blue,
%            anchorcolor = blue]{hyperref}
%\usepackage{cleveref}

%\newtheorem{claim}{Claim}
%\newtheorem{lemma}{Lemma}
%\newtheorem{theorem}{Theorem}
%\newtheorem{corollary}{Corollary}
%\newtheorem{proposition}{Proposition}
%\newtheorem{definition}{Definition}
%\newtheorem{rrule}{Rule}
%\newtheorem{observation}{Observation}
%\newtheorem{assumption}{Assumption}
%\newenvironment{claimproof}
%  {\noindent \textit{Proof.} }
%  {\hfill$\diamond$ \\ \goodbreak}
  
\newcommand{\Oh}{\ensuremath{\mathcal{O}}}

\usepackage{etoolbox}

\newcommand{\appref}[1]{{\hyperref[proof:#1]{\appsymb}}}

\crefname{observation}{Observation}{Observations}

%Commands

\newcommand{\mT}{\ensuremath{\mathcal{T}}}

\DeclareMathOperator{\poly}{poly}

\DeclareMathOperator{\td}{td}

\DeclareMathOperator{\cw}{cw}
\DeclareMathOperator{\tw}{tw}

\DeclareMathOperator{\Wh}{W[1]}

\newcommand{\nPHC}{{\normalfont coNP $\not \subseteq$ NP/poly}}

\newcommand{\problemdef}[3]{
    \begin{quote}
      \normalsize\textsc{#1} \smallskip \\
      \begin{tabularx}{0.9\textwidth}{@{}l@{\hspace{3pt}}X}
        \normalsize\textbf{Input:}    & \normalsize#2 \\
        \normalsize\textbf{Question:} & \normalsize#3
      \end{tabularx}
    \end{quote}
}

{\upshape\itshape}{\upshape\rmfamily}

%Problem
\newcommand{\HC}{\textsc{Hamiltonian Cycle}\xspace}
\newcommand{\HP}{\textsc{Hamiltonian Path}\xspace}
\newcommand{\stueL}{\textsc{Unweighted Flexible Graph Connectivity}\xspace}
\newcommand{\stue}{\textsc{UFGC}\xspace}

%Shortcuts in the tw-section

% some tikz stuff
\tikzstyle{para}=[rectangle,draw=black,minimum height=.8cm,fill=gray!10,rounded corners=1mm, on grid]

\newcommand{\tworows}[2]{\begin{tabular}{c}{#1}\\{#2}\end{tabular}}

\newcommand{\distto}[1]{\tworows{Distance to}{#1}}

\DeclareRobustCommand{\tikzdot}[1]{\tikz[baseline=-0.6ex]{\node[draw,fill=#1,inner sep=2pt,circle] at (0,0) {};}}

\definecolor{r}{rgb}{1.0, 0.4, 0.4}
\definecolor{r0}{rgb}{1.0, 0.7, 0.4}
\definecolor{r1}{rgb}{1.0, 0.4, 0.0}
\definecolor{r2}{rgb}{0.8, 0, 0.4}
\definecolor{r3}{rgb}{1, 0.4, 0.3}
\definecolor{b}{rgb}{0.4, 0.4, 1.0}
\definecolor{b0}{rgb}{0.4, 0.7, 1.0}
\definecolor{b1}{rgb}{0.0, 0.4, 1.0}
\definecolor{b2}{rgb}{0.4, 0.0, 0.8}
\definecolor{g}{rgb}{0.2, 0.9, 0.3}
\definecolor{y2}{rgb}{1, 1, 0.3}

%Todo commands

\newcommand{\SSS}{\ensuremath{T}}

\DeclareMathOperator{\DP}{DP}

\DeclareMathOperator{\move}{move}
\newcommand{\movev}{\ensuremath{\move\text{-}v}}

\newcommand{\kommentar}[1]{}
%----------------------------------------------

\title{Finding a Sparse Connected Spanning Subgraph in a non-Uniform Failure Model} 

\titlerunning{Finding a Sparse Connected Spanning Subgraph in a non-Uniform Failure Model} 

\author{Matthias Bentert}{University of Bergen, Department of Informatics, Norway}{matthias.bentert@uib.no}{}{Supported by the  European Research Council (ERC) under the European Union’s Horizon 2020 research and innovation programme (grant agreement No. 819416).}

\author{Jannik Schestag}{Friedrich-Schiller-Universität Jena, Fakultät für Mathematik und Informatik,  Jena, Germany}{tschestag@tudelft.nl}{https://orcid.org/0000-0001-7767-2970}{Supported by the German Academic Exchange Service (DAAD), project 57556279.}

\author{Frank Sommer}{Friedrich-Schiller-Universität Jena, Fakultät für Mathematik und Informatik,  Jena, Germany}{frank.sommer@uni-jena.de}{https://orcid.org/0000-0003-4034-525X}{Supported by the Deutsche Forschungsgemeinschaft (DFG), project EAGR,\\ {KO~3669/{6-1}}.}

\authorrunning{M.~Bentert, S.~Schestag, and F.~Sommer}

\acknowledgements{This work was initiated at the research retreat of the Algorithmics and Computational Complexity group of TU Berlin held in Darlingerode in September 2022.}

%\authorrunning{} 

%\ccsdesc[100]{\textcolor{red}{Replace ccsdesc macro with valid one}} %TODO mandatory: Please choose ACM 2012 classifications from https://dl.acm.org/ccs/ccs_flat.cfm 

\ccsdesc[500]{Theory of computation~Parameterized complexity and exact algorithms}
\ccsdesc[500]{Theory of computation~Graph algorithms analysis}

\keywords{Flexible graph connectivity, NP-hard problem, parameterized complexity, below-guarantee parameterization, treewidth}

\Copyright{Anonymous Author(s)}

\begin{document}

\maketitle

\begin{abstract}

We study a generalization of the classic \textsc{Spanning Tree} problem that allows for a non-uniform failure
model. 
More precisely, edges are either \emph{safe} or \emph{unsafe} and we assume that failures only affect unsafe edges. 
In \textsc{Unweighted Flexible Graph Connectivity} we are given an undirected graph~$G = (V,E)$ in
which the edge set~$E$ is partitioned into a set~$S$ of safe edges and a set~$U$ of unsafe edges and the task
is to find a set~$T$ of at most~$k$ edges such that~$T - \{u\}$ is connected and spans~$V$ for any unsafe edge~$u \in T$.
\textsc{Unweighted Flexible Graph Connectivity} generalizes both \textsc{Spanning Tree} and \textsc{Hamiltonian Cycle}.
We study \textsc{Unweighted Flexible Graph Connectivity} in terms of fixed-parameter tractability (FPT).
We show an almost complete dichotomy on which parameters lead to fixed-parameter tractability
and which lead to hardness. 
To this end, we obtain FPT-time algorithms with respect to the vertex deletion distance to cluster graphs and with respect to the treewidth. 
By exploiting the close relationship to \textsc{Hamiltonian Cycle}, we show that FPT-time algorithms for many smaller parameters are excluded under standard assumptions in parameterized complexity.
Regarding problem-specific parameters, we observe that \textsc{Unweighted Flexible Graph Connectivity} admits an FPT-time algorithm when parameterized by the number of unsafe edges. 
Furthermore, we investigate a below-upper-bound parameter for the number of edges of a solution. We show that this
parameter also leads to an FPT-time algorithm.
\end{abstract}

 \section{Introduction}
Computing a spanning tree is a fundamental task in computer science with a huge variety of applications in network design~\cite{FLL03} and clustering problems~\cite{XOX02}.
In \textsc{Spanning Tree}, one is given a graph~$G$, and the aim is to find a set~$T\subseteq E(G)$ of minimal size such that each pair of vertices in~$G$ is connected via edges in~$T$.
It is well known that \textsc{Spanning Tree} can be solved in polynomial time~\cite{K56,P57}.
This classic spanning-tree model has, however, a major drawback: all edges are seen as equal.
In many scenarios, for example in the construction of supply chains~\cite{LSDC06}, this is not sufficient: some connections might be more fragile than~others.
%In other words, not all edges are equally reliable.
To overcome this issue, several different network-design and connectivity problems are studied with additional robustness constraints~\cite{FML22,SV14}.
In this work, we continue this line of research and investigate a graph model in which the edge set is partitioned into \emph{safe} edges~$S$ and \emph{unsafe} edges~$U$~\cite{AHM22,AHMS22,BCHI21}.
In contrast to several other variants of robust connectivity, these two edge types enable us to model a non-uniform failure scenario~\cite{AHM22}.
%These two edge types enable us to model \emph{secure} (safe) and \emph{fault-prone} (unsafe) edges.
With these types at hand, we can relax the model of a spanning tree in the sense that one unsafe edge may~fail.
An edge set~$T$ of a graph~$G=(V,S,U)$ is a \emph{safe spanning connected subgraph} if~$T-\{e\}$ is a connected spanning subgraph for~$V$ for each unsafe edge $e \in T$~\cite{AHM22}.
This leads to the following problem:

\problemdef{\stueL}{A graph~$G=(V,S,U)$ and an integer~$k$.}{Is there a safe spanning connected subgraph~$T$ of~$G$ with~$|T| \leq k$?}

%Here,~$|F|$ is the size of~$F$. 
In the following, we refer to~$T$ as a solution. 
\stue{} generalizes several well-studied classic graph problems.
Examples include \textsc{$2$-Edge Connected Spanning Subgraph}~\cite{Huh04,SV14} ($S=\emptyset$) and \HC{}~\cite{ANS80,Golumbic04} ($S=\emptyset$ and~$k=|V(G)|$).
Thus, in sharp contrast to the classic \textsc{Spanning tree} problem, \stue{} is NP-hard.

\paragraph*{Related Work} 
If the graph is additionally equipped with an edge-cost function, the corresponding problem, in which one aims to find a solution of minimum total weight, is known as \textsc{Flexible Graph Connectivity}~\cite{AHM22,BCHI21}.
\textsc{Flexible Graph Connectivity} is mainly studied in terms of approximation:
Adjiashvili et al.~\cite{AHM22} provided a polynomial-time $2.527$-approximation algorithm which was improved by Boyd et al.~\cite{BCHI21} to a polynomial-time $2$-approximation.
Recently, a generalization called \textsc{$(p,q)$-Flexible Graph Connectivity} was introduced and studied in terms of approximation~\cite{BCHI21}.
In this model, up to $p$~unsafe edges may fail and the result is required to be $q$-edge connected.
Clearly, \textsc{Flexible Graph Connectivity} is the special case where~$p=1$ and~$q=1$.
Boyd et al.~\cite{BCHI21} provided a $(q+1)$-approximation for the case~$p=1$ and a $\Oh(q\log(n))$~approximation for the general case.
For the special case of~$p=2$, an improved approximation of~$\Oh(q)$ was provided by  Chekuri and Jain~\cite{CJ22}, and for the special case of~$q=1$ a constant-factor approximation was shown by Bansal et al.~\cite{BCGI22}.

%Also, a generalization of \textsc{$(p,q)$-Flexible Graph Connectivity} is investigated:
%Bansal et al.~\cite{BCGI22a} studied a problem in which each vertex pair is equipped with an individual connectivity requirement and where multiple levels of fault-tolerance are specified.
%Bansal et al.~\cite{BCGI22a} provided some approximation algorithms for this problem when~$p=1$ or~$q=1$.
%Furthermore, Adjiashvili studied so-called bulk-robustness problems where sets of failures are given~\cite{ASZ15}.

Our model with safe and unsafe edges is based on prior work by e.g.\;Adjiashvili et al.~\cite{AHMS22}, who studied the approximability of the classic \textsc{$(s,t)$-Path} and \textsc{$(s,t)$-Flow} problems in this model.
Even previous to this, some studies already indirectly investigated problems in this setting:
There have been some studies of classic graph-connectivity problems where one wishes for the stronger requirement of~$2$-edge connectivity.
In terms of our setting, this corresponds to the case that all edges are unsafe. %,~$p=1$, and~$q=1$.
One prominent example is the aforementioned \textsc{$2$-Edge Connected Spanning Subgraph}~\cite{Civril23,HVV19,SV14}.
There exists a $4/3$-approximation~\cite{Civril23,HVV19,SV14}, and a~$6/5$-approximation in cubic graphs~\cite{Civril23}.
Furthermore, for its generalization \textsc{$q$-Edge Connected Spanning Subgraph} approximation algorithms of ratio~${1+1/(2q)+\Oh(1/(q^2))}$ are known~\cite{GG12}.
The problem \textsc{$q$-Edge Connected Spanning Subgraph} is also studied in terms of parameterized complexity:
Basavaraju et al.~\cite{BMRS17} provided an FPT-time algorithm with respect to the number of edges deleted from~$G$ to obtain a solution.
A variant of \textsc{$2$-Edge Connected Spanning Subgraph} in digraphs was studied by Bang-Jensen and Yeo~\cite{BY08}.
They provided an FPT-time algorithm for a solution-size related parameter below an upper bound.

%In the variant \textsc{Steiner-2NCS} of the classic \textsc{Steiner Tree} problem, the aim is to find a two-vertex connected or two-edge connected solution~\cite{BCGI22b}, respectively.
%For these problems an FPTAS is known~\cite{BCGI22b}.
%In \textsc{Survivable Network Design}, generalizing \textsc{$2$-Edge Connected Spanning Subgraph}, we are given an edge-weighted graph~$G$, a set~$T$ of terminals and a demand~$d_{uv}$ for each pair of terminals, and the task is to find a minimal-weight subgraph~$H$ which contains at least $d_{uv}$~disjoint paths between~$u$ and~$v$~\cite{FML22}.
%The parameterized complexity of several connectivity variants of this problem is studied by Feldmann et al.~\cite{FML22}.
%For example, they showed W[1]-hardness for the solution size if edge-disjoint paths are required.
%Chekuri and Jain~\cite{CJ22a} studied the special case~$T=V(G)$ of \textsc{Survivable Network Design} in the model with safe and unsafe edges and provided some approximation algorithms.

\paragraph*{Our Results}
We study the parameterized complexity of \stue with respect to many natural parameters.
We investigate problem-specific as well as structural graph parameters.
A first idea to solve \stue might be the following: merge each safe component into a single vertex by computing an arbitrary spanning tree of this component and then use existing algorithms for \textsc{$2$-Edge Connected Spanning Subgraph} to solve the remaining instance.
This approach, however, does not yield a minimal solution: 
Consider a~$C_4$ on unsafe edges where we add one safe edge. 
Then, the unique minimal solution consists of all four unsafe edges.
Hence, we need new techniques to solve \stue optimally.

%\todo[inline]{Reihenfolge tauschen, da so das technische Highlight erst am Ende erwaehnt wird?}

In terms of problem-specific parameters, we first study the parameterization by the number~$|U|$ of unsafe edges. 
We provide an FPT-time algorithm for~$|U|$ (\Cref{prop-fpt-u}), by exploiting the fact that \textsc{Disjoint Subgraphs} admits an FPT-time algorithm.
Second, we investigate parameterizations above a lower bound and below an upper bound for the solution size.
For the former parameterization, we obtain hardness due to the connection to \HC{} and for the latter we obtain an FPT-time algorithm (\Cref{thm-fpt-for-2n-ell}).
Our algorithm is inspired by an algorithm of Bang-Jensen and Yeo~\cite{BY08} for \textsc{Minimum Spanning Strong Subgraph}.
In this problem one is given a digraph~$D$ and one wants to find a minimum spanning strong digraph of~$D$.
The main technical hurdle in our adaption is that in \stue we have two different edge types which have to be treated differently compared with the problem studied by Bang-Jensen and Yeo~\cite{BY08}.

%For example, we provide an FPT-time algorithm with respect to the number of unsafe edges in the solution by exploiting that \textsc{Disjoint Subgraphs} admits an FPT-time algorithm for the number of subgraphs~\cite{RS95}.
%Afterward, we investigate parameterization above a lower bound and below an upper bound for the solution size.
%For the former parameterization, we obtain hardness due to the connection to \HC{} and for the latter we obtain an FPT-time algorithm.

\begin{figure}[t]
\centering

\begin{tikzpicture}[node distance=2*0.45cm and 3.7*0.38cm, every node/.style={scale=0.57}]
\linespread{1}
% first
\node[para,fill=g] (vc) {Minimum Vertex Cover};
\node[para, fill=g, xshift=4cm] (ml) [right=of vc] {Max Leaf \#};
\node[para, xshift=-3.4cm,fill=g] (dc) [left=of vc] {\distto{Clique}};

% second
\node[para, yshift=-5mm,xshift=-14mm] (mcc) [below left=of dc] {\tworows{Minimum}{Clique Cover}}
edge (dc);
\node[para,yshift=-5mm,xshift=-12.5mm] (dcc) [below= of dc] {\distto{Co-Cluster}}
edge (dc)
edge[bend left=10] (vc);
\node[para,fill=g,xshift=13mm,yshift=-5mm] (dcl) [below= of dc] {\distto{Cluster}}
edge (dc)
edge (vc);
\node[para, xshift=13.5mm,yshift=-5mm,fill=g] (ddp) [below=of vc] {\distto{disjoint Paths}}
edge (vc)
edge (ml);
\node[para,yshift=-5mm,fill=g] (fes) [below =of ml] {\tworows{Feedback}{Edge Set}}
edge (ml);
\node[para,yshift=-5mm,fill=g] (bw) [below right=of ml] {Bandwidth}
edge (ml);
\node[para,xshift=-18mm,yshift=-5mm,below=of vc] (nd) {\tworows{Neighborhood}{Diversity}}
edge (vc)
edge (dc);

% third
\node[para,xshift=2mm] (is) [below=of mcc] {\tworows{Maximum}{Independent Set}}
edge (mcc);
\node[para,xshift=5mm] (dcg) [below= of dcc] {\distto{Cograph}}
edge (dcc)
edge (dcl);
\node[para,xshift=5mm] (dig) [below= of dcl] {\distto{Interval}}
edge (dcl)
edge (ddp);
\node[para,fill=g,yshift=-10mm] (fvs) [below= of ddp] {\tworows{Feedback}{Vertex Set}}
edge (ddp)
edge (fes);
\node[para, xshift=3.5mm, yshift=-0mm,fill=g] (td) [right=of ddp] {Treedepth}
edge [bend right=28] (vc);
\node[para,fill=r3] (mxd) [below= of bw] {\tworows{Maximum}{Degree}}
edge (bw);

% fourth
\node[para, fill=r3] (ds) [below=of is] {\tworows{Minimum}{Dominating Set}}
edge (is);
\node[para, yshift=-55mm, xshift=-40mm,fill=r3] (dbp) [below left= of fvs] {\distto{Bipartite}}
edge (fvs);
\node[para, xshift=0mm, yshift=-7mm,fill=g] (dop) [below= of fvs] {\distto{Outerplanar}}
edge (fvs);
\node[para, yshift=-13mm,fill=g] (pw) [below= of td] {Pathwidth}
edge (ddp)
edge (td)
edge (bw);
\node[para,fill=r3,yshift=-10mm] (hid) [below= of mxd] {$h$-index}
edge (ddp)
edge (mxd);
\node[para, fill=r3] (gen) [left= of hid] {Genus}
edge (fes);

% etc
\node[para,fill=r3,xshift=1.5mm] (mxdia) [below=of ds] {\tworows{Max Diameter}{of Components}}
edge (ds)
edge[bend right=5] (td)
edge[bend right=37] (nd);
\draw (mxdia.north east) to (dcg);
\node[para, yshift=5mm,fill=g] (tw) [below right= of dop] {Treewidth}
edge (dop)
edge (pw);
\node[para, yshift=-15mm,fill=r3] (acn) [below = of gen] {\tworows{Acyclic}{Chromatic \#}}
edge (hid)
edge (gen)
edge (tw);
\node[para, yshift=-2.5mm,fill=r3] (dpl) [below = of dop] {\distto{Planar}}
edge (dop)
edge (acn);
\node[para, xshift=-15mm, yshift=-02mm,fill=y2] (clw) [below= of dpl] {Clique-width}
edge (dcg)
edge[bend right=15] (tw);
\node[para,fill=r3] (avgdist) [below=of mxdia] {\tworows{Average}{Distance}}
edge (mxdia);
\node[para, xshift=10mm, yshift=0mm,fill=r3] (dch) [right= of avgdist] {\distto{Chordal}}
edge (dig)
edge (fvs);
\node[para,fill=r3] (deg) [below= of acn] {Degeneracy}
edge (acn);
\node[para, yshift=-12mm, xshift=11mm,fill=r3] (box) [left=of clw] {Boxicity}
edge (dig)
edge (nd)
edge (acn);
\node[para,fill=r3] (cn) [below =of dbp] {Chromatic \#}
edge (deg)
edge (dbp);
\node[para, yshift=0mm, xshift=-4mm,fill=r3] (dpf) [left= of cn] {\distto{Perfect}}
edge (dch)
edge (dcg)
edge (dbp);
\node[para,fill=r3] (avd) [below=of deg] {\tworows{Average}{Degree}}
edge (deg);

\node[para,fill=r3] (mnd) [below=of avd] {\tworows{Minimum}{Degree}}
edge (avd);

\node[para,fill=r3] (mc) [below=of cn] {\tworows{Maximum}{Clique}}
edge (cn);
\node[para, yshift=5mm, xshift= 15mm,fill=r3] (cho) [right= of mc] {Chordality}
edge (box)
edge (dch)
edge (cn)
edge (dcg);
\node[para,fill=r3] (gi) [below=of avgdist] {Girth}
edge (avgdist);
\end{tikzpicture}
\caption{
The relations between structural graph parameters and our respective results for \stue.
A parameter~$k$ is marked
green (\tikzdot{g}) if \stue{} admits an FPT-time algorithm for~$k$,
yellow (\tikzdot{y2}) if it is W[1]-hard with respect to~$k$,
and red (\tikzdot{r3}) if it is NP-hard for constant~$k$ (para-NP-hard).
We do not know the status for parameters with white boxes.
An edge from a parameter~$\alpha$ to a parameter~$\beta$ below~$\alpha$ means that there is a function~$f$ such that~$\beta \leq f(\alpha)$ in every graph.
Hardness results for~$\alpha$ imply the same hardness results for~$\beta$ and an FPT-time algorithm for~$\beta$ implies an FPT-time algorithm for~$\alpha$.
}
	\label{fig-results}
\end{figure}
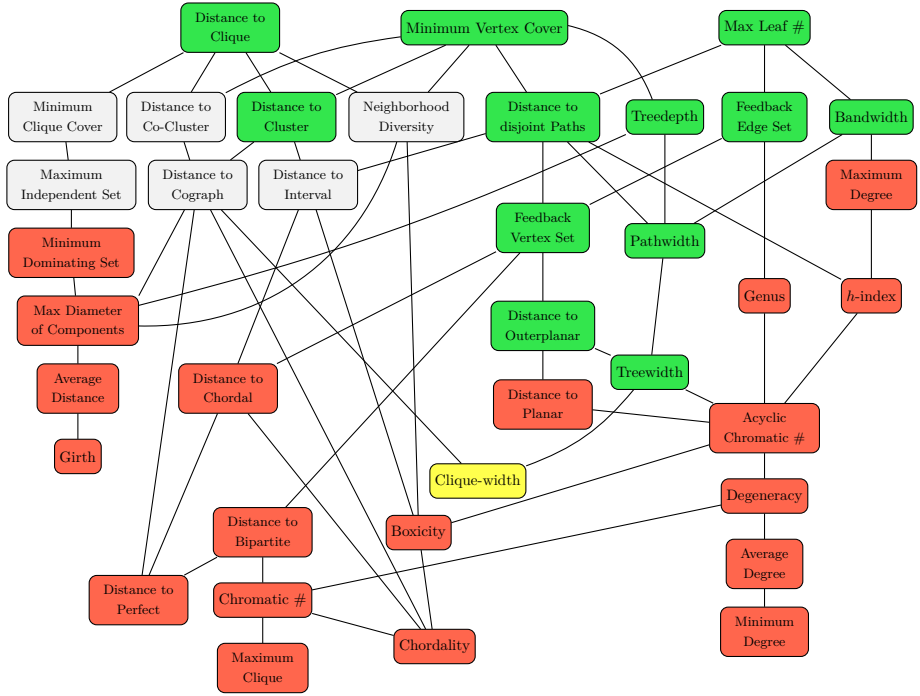

We also study parameterization by structural graph parameters; see \Cref{fig-results} towards a dichotomy for \stue.
For many parameters such as maximum degree and domination number, we obtain para-NP-hardness, due to the connection to \HC.
For other parameters, we obtain FPT-time algorithms. 
For example, for the treewidth~$\tw$ of the input graph and the (vertex) deletion distance to cluster graphs. 
With these two algorithms at hand, we obtain an almost complete border between parameters which allow for FPT-time algorithms and those that do not.

% it is relatively easy to see that one can obtain an MSO${}_2$ formula to check whether a set of edges is an unsafe spanning tree.
%Thus, the famous theorem by Courcelle then implies an FPT-time algorithm for the treewidth to find the smallest set of edges satisfying this formula~\cite{arnborg}.
%Unfortunately, the running time of this algorithm is far from practical.
For the treewidth~$\tw$, we present an FPT-time algorithm based on dynamic programming with running time~$n \cdot 2^{\Oh(\tw \log(\tw))}$ (\Cref{prop-fpt-tw}).
In order to achieve this running time, we define and exploit an encoding of size~$\tw^{\tw}$ to check all possibilities on how the current bag of the tree decomposition is connected with the already considered vertices.
%Note that the FPT-time algorithm for~$\tw$  implies FPT-time algorithms for the larger parameters pathwidth and treedepth.
Finally, we show our main technical result:  \stue{} parameterized by the vertex-deletion distance to cluster graphs admits an FPT-time algorithm (\Cref{thm-fpt-dist-cluster}).
Therein, we use a combination of the following two main ingredients:
First, given a modulator~$K$, that is, a set of vertices such that~$G' = G - K$ is a cluster graph, we can safely connect all vertices in~$K$ using vertices from~$\Oh(|K|)$ cliques in~$G'$ in any solution.
We call these cliques the \emph{backbone} of the solution and we can guess in FPT time the structure of the backbone.
Second, we show how to compute the size of a smallest solution that implements a backbone in FPT time using algorithms for both \textsc{Maximum Bipartite Matching} and \textsc{Disjoint Subgraphs} in the process.

\section{Preliminaries}
\label{sec-prelims}

For~$n \in \mathds{N}$, we denote by~$[n]$ the set $\{ 1, \dots, n \}$.
Throughout this work, all logarithms have 2 as their base.
%\appendixsection{sec-prelims}

\paragraph*{Graph Notation}
For a graph~$G=(V,S,U)$, we denote by~$V(G)$ and~${E(G)\coloneqq S\cup U}$ its \emph{vertex set} and \emph{edge set}, respectively. 
Furthermore, we denote the number of vertices by~$n \coloneqq |V(G)|$.
Let~$Z \subseteq V(G)$ be a vertex set.
We denote the \emph{subgraph induced} by~$Z$ by~$G[Z]$ and the graph obtained by \emph{removing} the vertices of~$Z$ by~$G - Z \coloneqq G[V(G) \setminus Z]$.
We denote by~${N_G(Z)\coloneqq\{ y \in V(G) \setminus Z : \{y,z\} \in E(G), z \in Z \}}$ and~${N_G[Z]\coloneqq N_G(Z) \cup Z}$, the \emph{open} and \emph{closed neighborhood} of~$Z$, respectively.
By~$\Delta_G\coloneqq \max_{v\in V(G)}|N(v)|$ we denote the \emph{maximum degree} of~$G$.
For all these notations, whenever~$Z$ is a singleton~$\{ z \}$ we may write~$z$ instead of~$\{z\}$.
We may drop the subscript~$\cdot_G$ when it is clear from the context. 
In a \emph{cluster graph} each connected component is a clique.

A \emph{tree decomposition} of graph~$G$ is a pair~$(\mT=(\mathcal{V}, \mathcal{A}), \{X_t:t\in\mathcal{V}\})$ consisting of a directed tree~$\mT$ with root~$r\in \mathcal{V}$ and \emph{bags}~$X_t\subseteq V(G)$ such that
\begin{enumerate}
\item for each~$v\in V(G)$, there is at least one~$t\in \mathcal{V}$ with~$v\in X_t$,
\item for each~$\{u,w\}\in E(G)$, there is at least one~$t\in \mathcal{V}$ such that~$u\in X_t$ and~$w\in X_t$, and
\item for each~$v\in V(G)$, the subgraph~$\mT[\mathcal{V}_v]$ is connected, where ${\mathcal{V}_v \coloneqq \{t\in \mathcal{V}\mid v\in X_t\}}$.
\end{enumerate}

By~$Y_t\coloneqq(\bigcup_{t' \text{ is a descendant of } t} X_{t'})\setminus X_t$ we denote the \emph{forgotten vertices} in note~$t\in\mathcal{V}$. 
The~\emph{width of a tree decomposition} is the size of the largest bag minus one and the \emph{treewidth}~$\tw(G)$ of a graph~$G$ is the minimal width of any tree decomposition of~$G$.

We consider tree decompositions with specific properties.
A node~$t\in \mathcal{V}$ is called:
\begin{enumerate}
\item a \emph{leaf node} if~$t$ has no child nodes in~$\mT$,
\item a \emph{forget node} if~$t$ has exactly one child node~$t'$ in~$\mT$ and~$X_{t'} = X_t \cup \{v\}$ for some~$v\in V(G)\setminus X_t$,
\item an \emph{introduce node} if~$t$ has exactly one child node~$t'$ in~$\mT$ and~$X_{t'} = X_t \setminus \{v\}$ for some~$v\in V(G)\setminus X_{t'}$, or
\item a \emph{join node} if~$t$ has exactly two child nodes~$t_1$ and~$t_2$ in~$\mT$ and~$X_t = X_{t_1} = X_{t_2}$.
\end{enumerate}

A tree decomposition is called~\emph{nice} if every node~$t\in \mathcal{V}$ is either a leaf node, a forget node, an introduce node, or a join node.
Given a tree decomposition of width~$\omega$, in polynomial time one can compute a nice tree decomposition also of width~$\omega$~\cite{Kloks94}.
For a graph~$G$, a nice tree decomposition of width at most~$2\tw(G)+1$ can be computed in $\Oh(2^{\Oh(\tw(G))} \cdot n)$~time~\cite{Kor22}.

Let~$u,v\in V(G)$.
We say that~$u$ and~$v$ are \emph{safely connected} if there exists a path from~$u$ to~$v$ using only safe edges or if there exist two paths~$P_1$ and~$P_2$ from~$u$ to~$v$ such that~$E(P_1)\cap E(P_2)\subseteq S$.
We say that~$u$ and~$v$ are \emph{unsafely connected} if~$u$ and~$v$ are not safely connected and if there exists a path from~$u$ to~$v$.

For more details on graph notation, we refer to the textbook by West~\cite{West00}.

\paragraph*{Parameterized Algorithmics}
A parameterized problem~$L$ is a set of instances of the form~$(I, k)$ where~$I\in\Sigma^*$ for a finite
alphabet~$\Sigma$, and~$k \in \mathds{N}_0$ is the \emph{parameter}.
A parameterized problem is \emph{fixed-parameter tractable} if every instance~$(I, k)$ can be solved in $f(k) \cdot |I|^{\Oh(1)}$~time for some computable function~$f$.
An algorithm with such a running time is referred to as \emph{FPT-time algorithm}.
A \emph{parameterized reduction} is an algorithm that takes an instance~$I_1 = (x_1, k_1)$ of a parameterized problem~$L_1$ and transforms~$I_1$ into an instance~$I_2 = (x_2, k_2)$ of a parameterized problem~$L_2$ such
that~$I_1 \in L_1$ if and only if~$I_2 \in L_2$ and~$k_2 \le g(k_1)$ with running time~$f(k_1) \cdot |x_1|^{\Oh(1)}$
for some computable functions~$f$ and~$g$. 
Parameterized reductions can be used to show that a parameterized problem~$L$ is not likely to be fixed-parameter tractable by reducing some \emph{W[1]-hard} parameterized problem to~$L$.
An algorithm with running time~$f(k)\cdot |I|^{g(k)}$ for computable function~$f$ and~$g$ is called \emph{XP-algorithm}.
A problem is \emph{para-NP-hard} for parameter~$k$ if it is NP-hard even for constant values of~$k$.
The \emph{Exponential Time Hypothesis (ETH)} is a standard complexity theoretical conjecture used to prove lower bounds. 
The ETH implies that \textsc{3-SAT} cannot be solved in $2^{o(|\varphi|)}$~time where~$\varphi$ denotes the input formula~\cite{IPZ01}.

For more details on parameterized complexity, we refer to the textbooks by Cygan et al.~\cite{CFK+15} and by Downey and Fellows~\cite{DF13}.

\section{Basic Observations}
\label{sec-basic-obs}
In this section, we explore the aforementioned connection between \stue{} and \HC{} and some implications thereof.
To this end, note that if the graph contains no safe edges, then each vertex needs to be contained in at least one cycle in the solution.
We show that a solution of size~$n$ to \stue{} (if such a solution exists) corresponds to a Hamiltonian cycle.
%Formally, we show the following.

%\begin{definition}
%\label{def-hc-corresponding}
%Let~$G$ be an instance of \HC. 
%The \emph{corresponding} instance to~$G$ of \stue{} is~$(G'=(V(G),\emptyset, E(G)),n)$.
%\end{definition}

\begin{observation}
\label{lem-correspondin-equivalent}
Let~$G$ be a graph and let~${I \coloneqq (G'=(V(G),\emptyset,E(G)),n)}$.
Then~$G$ contains a Hamiltonian cycle if and only if ~$I$ is a yes-instance of \stue.
\end{observation}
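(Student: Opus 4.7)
The plan is to prove both directions directly, with the forward direction being immediate and the backward direction resting on a simple edge-count argument.

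For the forward direction, I would take a Hamiltonian cycle $C$ of $G$ and exhibit $E(C)$ as a solution of size $n$ for $I$. Since $C$ is spanning and all edges are unsafe in $G'$, I only need to verify the robustness condition: for every edge $e \in E(C)$, the graph $C - e$ is a Hamiltonian path on $V(G)$ and therefore connected and spanning. Hence $I$ is a yes-instance.

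For the backward direction, suppose $T$ is an unsafe spanning connected subgraph of $G'$ with $|T| \leq n$. I first argue that $|T| = n$ exactly. Since $T$ itself must span $V(G)$ and be connected, we have $|T| \geq n-1$. If equality held, then $T$ would be a spanning tree; but then removing any of its edges (all of which are unsafe in $G'$) would disconnect $T$, contradicting the definition of an unsafe spanning connected subgraph. Thus $|T| = n$.

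With $|T| = n$ edges on $n$ vertices and $T$ connected, the cyclomatic number of $T$ equals $1$, so $T$ contains a unique cycle $C^\star$. The crux of the argument is now that every edge of $T$ must lie on some cycle of $T$: indeed, any edge not on a cycle would be a bridge, and since all edges of $T$ are unsafe, removing such a bridge would disconnect $T$, again violating robustness. Since $C^\star$ is the only cycle in $T$, it must contain every edge of $T$, which forces $T = C^\star$ and hence $C^\star$ to be a spanning cycle of $G'$. This spanning cycle uses only edges of $G$, so it is a Hamiltonian cycle of $G$.

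There is no real obstacle here; the only point to be careful about is the sharp use of the size bound $k = n$ (rather than $k \geq n$), which simultaneously rules out the spanning-tree case and leaves just enough room for exactly one cycle that must then swallow every edge.
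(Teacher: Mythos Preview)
Your proof is correct. The forward direction matches the paper exactly, and your backward direction is a valid alternative to the paper's argument.

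The paper's backward direction is organized slightly differently: rather than going through the cyclomatic number and bridges, it argues directly that every vertex of $T$ must have degree at least~$2$ (a degree-$1$ vertex would be isolated after deleting its unique incident unsafe edge), and since $\sum_v \deg(v) = 2|T| = 2n$ this forces $T$ to be $2$-regular, hence a disjoint union of cycles; connectivity then gives a single Hamiltonian cycle. Your route---connected with $n$ edges implies unicyclic, and the no-bridge condition forces every edge onto that unique cycle---reaches the same conclusion with equally elementary tools. Both arguments are short and neither offers a real advantage over the other; they are just two natural ways to squeeze a Hamiltonian cycle out of the tight edge budget.
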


\begin{proof}
Note that since~$I$ does not contain any safe edges, each solution of~$I$ contains at least~$n$~edges.

$(\Rightarrow)$ let~$C$ be a Hamiltonian cycle of~$G$.
Then,~$C$ is a solution of size~$n$ for~$I$.

$(\Leftarrow)$ 
Let~$T\subseteq E(G)$ be a solution of size~$n$ for~$I$.
Since~$T$ is spanning, each edge is unsafe, and since~$|T|=n$, we conclude that each vertex in the edge-induced subgraph of~$T$ has degree exactly two.
Thus, $T$ is a disjoint union of cycles.
Since~$T$ is a solution to \stue{}, there is a path between each pair of vertices in~$T$, that is, the solution has exactly one connected component.
Thus, $T$ is a Hamiltonian cycle for~$G$.
\end{proof}

Since \Cref{lem-correspondin-equivalent} describes a polynomial-time reduction from \HC{} to \stue, we can directly transfer any NP-hardness results for \HC{} on restricted graph classes to \stue{} for the special case that there are no safe edges.
It is known that \HC{} remains NP-hard on split graphs~\cite{Golumbic04}, and on graphs with constant maximum degree~\cite{GJ79}.
More precisely, \HC{} remains NP-hard on subcubic bipartite planar graphs~\cite{ANS80}.
Moreover, \HC{} is NP-hard for graphs with exactly one universal vertex, as shown next.

\begin{observation}
\label{obs-dom-number-one}
\HC{} is NP-hard even if there is exactly one universal vertex.
\end{observation}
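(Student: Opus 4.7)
The plan is to give a polynomial-time reduction from \HP{} to \HC{} in a way that introduces a single universal vertex without creating any others. I would start from a restricted version of \HP{} that is already known to be NP-hard on graphs with small maximum degree, for example on cubic graphs with at least five vertices; on such graphs, no vertex is universal because every degree is at most~$3 < n-1$.

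Given such an instance~$G$, I would construct~$G'$ by adding a single new vertex~$v^*$ and making it adjacent to every vertex of~$G$. By construction,~$v^*$ is universal in~$G'$. To verify that~$v^*$ is the \emph{unique} universal vertex, I would check that every original vertex~$u\in V(G)$ has degree at most~$4$ in~$G'$ (its degree in~$G$, bounded by~$3$, plus the new edge to~$v^*$), while~$|V(G')|-1 = n \ge 5$, so~$u$ cannot be universal.

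For correctness of the reduction, I would argue both directions in one short paragraph: a Hamiltonian path $v_{1}v_{2}\cdots v_{n}$ of~$G$ extends to the Hamiltonian cycle $v^{*}v_{1}v_{2}\cdots v_{n}v^{*}$ of~$G'$ because~$v^*$ is adjacent to all vertices; conversely, any Hamiltonian cycle of~$G'$ visits~$v^*$ exactly once, and deleting~$v^*$ together with its two incident cycle-edges yields a spanning path of~$G$.

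The main obstacle I anticipate is ensuring that no original vertex accidentally becomes universal in~$G'$; that is exactly what forces me to start from a maximum-degree-bounded version of \HP{} rather than from the general problem. A secondary subtlety is that the reduction must start from \HP{} rather than \HC{}: since~$v^*$ contributes two consecutive cycle-edges in~$G'$, collapsing it leaves a Hamiltonian \emph{path} (with arbitrary endpoints) in~$G$, so a direct reduction from \HC{} on degree-bounded graphs would not be correct.
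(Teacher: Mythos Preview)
Your proposal is correct and essentially identical to the paper's proof: the paper also reduces from \HP{} on graphs of constant maximum degree, adds one universal vertex, and uses exactly the same back-and-forth between Hamiltonian paths of~$G$ and Hamiltonian cycles of~$G'$. Your write-up is in fact slightly more careful than the paper's, since you explicitly verify the degree count that rules out a second universal vertex and you articulate why the reduction must start from \HP{} rather than \HC{}.
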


\begin{proof}
We reduce from the NP-hard \HP{} problem on graphs with constant maximum degree~\cite{GJ79}.
In \HP{} the aim is to find a path which visits every vertex exactly once.
Clearly, since the \HP{} instance has a constant maximum degree, we can assume that there is no universal vertex.

Let~$G$ be an instance of \HP{}. 
The instance~$G'$ of \HC{} consists of a copy of~$G$ plus a new universal vertex~$u$.
Since~$G$ has no universal vertex, we obtain that~$G'$ has exactly one universal vertex.
To see that both instances are equivalent, note that~$P$ is a Hamiltonian path of~$G$ if and only if the graph obtained by connecting both endpoints of~$P$ with the new vertex~$u$ is a Hamiltonian cycle of~$G'$.
\end{proof}

We obtain the following simple corollary for \stue.
\begin{corollary}
\label{cor-ust-np-h}
\stue{} is NP-hard, even if there are no safe edges, $k = n$, and the graph $G$~$a)$ is subcubic, planar and bipartite,~$b)$ is a split graph, or~$c)$ has domination number one.
\begin{comment}
\begin{enumerate}[a)]
\item is subcubic, planar, and bipartite,
\item is a split graph,
\item has domination number one, or
\end{enumerate}
\end{comment}
\end{corollary}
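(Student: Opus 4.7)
The plan is to derive this corollary as a direct consequence of \Cref{lem-correspondin-equivalent} combined with the NP-hardness results for \HC{} already cited in the surrounding text. The reduction in \Cref{lem-correspondin-equivalent} takes an arbitrary graph $G$ and produces the \stue{} instance $I = (G' = (V(G), \emptyset, E(G)), n)$, that is, the underlying undirected graph is preserved verbatim and only the edge labeling and the budget are fixed. In particular, every structural property that depends solely on the vertex set and the (unlabeled) edge set of $G$ carries over unchanged to~$G'$.

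With this observation in hand, I would go through the four cases mechanically. For (a), I would take $G$ to be a subcubic planar bipartite graph on which \HC{} is NP-hard by Akiyama, Nishizeki, and Saito~\cite{ANS80}; since the reduction does not add or remove edges or vertices, $G'$ is again subcubic, planar, and bipartite. For (b), I would start from a split graph $G$ on which \HC{} is NP-hard by Golumbic~\cite{Golumbic04}, and the same argument shows that $G'$ is a split graph. For (d), I would invoke the NP-hardness of \HC{} on graphs of constant maximum degree due to Garey and Johnson~\cite{GJ79}, noting once more that the maximum degree of $G'$ equals that of $G$. For (c), I would use \Cref{obs-dom-number-one}, which provides an input graph $G$ with exactly one universal vertex; this vertex is still universal in $G'$, so $G'$ has domination number one.

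In each case, \Cref{lem-correspondin-equivalent} gives that $G$ has a Hamiltonian cycle if and only if $(G', n)$ is a yes-instance of \stue{}, so the polynomial-time reduction $G \mapsto (G', n)$ preserves equivalence and transfers the NP-hardness. The only thing to check is the trivial structural preservation in each case, which is immediate because $V(G') = V(G)$ and $E(G') = S \cup U = \emptyset \cup E(G) = E(G)$. I do not anticipate any real obstacle: the proof is essentially a bookkeeping exercise, and the entire argument can be written as a single paragraph appealing to \Cref{lem-correspondin-equivalent}, \Cref{obs-dom-number-one}, and the three cited hardness results for \HC{}.
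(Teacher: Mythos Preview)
Your proposal is correct and mirrors exactly how the paper derives the corollary: it is stated as an immediate consequence of \Cref{lem-correspondin-equivalent} (which preserves the underlying graph), the cited NP-hardness results for \HC{} on the respective graph classes, and \Cref{obs-dom-number-one} for the domination-number-one case. There is nothing to add.
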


Moreover, since \HC{} is W[1]-hard parameterized by clique-width~$\cw$~\cite{FGLS10}, and cannot be solved in $f(\cw)\cdot n^{o(\cw)}$~time assuming the ETH~\cite{FGLSZ19}, we obtain the following.

\begin{corollary}
\stue{} is $\Wh$-hard with respect to the clique-width~$\cw$, even if~$S=\emptyset$ and~$k=n$. Moreover, this restricted version cannot be solved in~$f(\cw)\cdot n^{o(\cw)}$~time unless the ETH fails.
\end{corollary}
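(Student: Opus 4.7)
The plan is to derive this corollary directly from \Cref{lem-correspondin-equivalent} together with the cited hardness results for \HC{}. Recall that \Cref{lem-correspondin-equivalent} supplies, for every graph~$G$, an equivalent \stue{} instance~$I = (G' = (V(G), \emptyset, E(G)), n)$, so $G$ has a Hamiltonian cycle if and only if $I$ is a yes-instance. Crucially, the reduction leaves the underlying graph structure untouched: it only annotates every edge as unsafe and fixes the budget $k=n$. In particular, $G'$ and $G$ have the same vertex set and the same edge set (as an unlabelled graph), so $\cw(G') = \cw(G)$. The reduction is clearly polynomial-time computable.

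With this parameter-preserving polynomial-time reduction at hand, I would argue as follows. Suppose, for contradiction, that \stue{} restricted to $S = \emptyset$ and $k = n$ admitted an FPT-time algorithm in the clique-width parameter $\cw$. Then, given any \HC{} instance $G$, we could invoke \Cref{lem-correspondin-equivalent} to produce the instance $I$ in polynomial time, compute a clique-width expression of equal width (or simply use the same one), and call the hypothetical algorithm; by equivalence of the two instances, this would solve \HC{} in FPT time with respect to $\cw$, contradicting the W[1]-hardness of \HC{} parameterized by clique-width~\cite{FGLS10}. This yields the first part of the statement.

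The same reduction transfers the fine-grained lower bound. If \stue{} restricted to $S = \emptyset$ and $k = n$ could be solved in $f(\cw)\cdot n^{o(\cw)}$ time, then composing with \Cref{lem-correspondin-equivalent} (which adds only polynomial overhead and preserves both $n$ and $\cw$) would yield an algorithm for \HC{} with the same running time, contradicting the lower bound of~\cite{FGLSZ19}.

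There is essentially no technical obstacle: the only thing to check is that clique-width is preserved by the reduction, which is immediate since \Cref{lem-correspondin-equivalent} does not modify the underlying graph. The proof can therefore be written in a few lines, and the whole argument is a direct transfer of known hardness via the reduction already established in \Cref{sec-basic-obs}.
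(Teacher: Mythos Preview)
Your proposal is correct and matches the paper's approach exactly: the paper states the corollary as an immediate consequence of \Cref{lem-correspondin-equivalent} together with the cited W[1]-hardness~\cite{FGLS10} and lower bound~\cite{FGLSZ19} for \HC{}, relying (implicitly) on the fact that the reduction leaves the underlying graph and hence~$\cw$ unchanged. Your write-up is somewhat more explicit than necessary, but the argument is the same.
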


We conclude this section with a proof that \stue{} is NP-hard even if the bisection width of the input graph is one.

\begin{observation}
\label{obs-bisection-1}
\stue{} is NP-hard even if the bisection width is one.
\end{observation}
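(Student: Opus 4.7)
The plan is to reduce from \HC{}, along the lines of \Cref{lem-correspondin-equivalent} and \Cref{cor-ust-np-h}, by taking two disjoint copies of a \HC-instance and gluing them through a single safe bridge. Given an $n$-vertex instance $G$ of \HC, I would build $G'$ from two vertex-disjoint copies $G_1, G_2$ of $G$ together with a single \emph{safe} bridge edge $e = v_1v_2$ where $v_i \in V(G_i)$; every edge inside $G_1$ and $G_2$ is unsafe, and I set $k \coloneqq 2n+1$. The partition $(V(G_1), V(G_2))$ is balanced and has only $e$ crossing, so $G'$ is connected with bisection width exactly one.

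The forward direction is immediate: a Hamiltonian cycle $C$ of $G$ lifts to its two copies $C_1, C_2$ together with the bridge $e$, giving a spanning edge set of size $2n+1$ whose only safe edge is $e$. Deleting any single unsafe edge still leaves one full cycle, the bridge, and a spanning path of the damaged side, so $G'$ is a yes-instance.

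For the reverse direction I would argue as follows. Any solution $T$ must contain $e$, since $e$ is the only edge crossing the cut and $T$ spans $V(G')$. Writing $T_i \coloneqq T \cap E(G_i)$, the requirement that $T - \{u\}$ remain connected for every unsafe $u \in T_i$ forces $T_i$ to be a 2-edge-connected spanning subgraph of $G_i$ (otherwise deleting a bridge of $T_i$ would disconnect a part of $V(G_i)$ from $v_i$ and hence from $G_{3-i}$). A 2-edge-connected subgraph on $n$ vertices has at least $n$ edges, so $|T_i| \ge n$; together with $|T_1| + |T_2| + 1 \le 2n + 1$ this gives $|T_1| = |T_2| = n$, and a 2-edge-connected subgraph on $n$ vertices with exactly $n$ edges must be a Hamiltonian cycle. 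Hence $G$ contains a Hamiltonian cycle.

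I do not anticipate any serious obstacle. The only subtlety worth flagging is why the bridge must be declared safe rather than unsafe: an unsafe bridge would itself violate the safety condition, since deleting it would disconnect the two halves of $G'$.
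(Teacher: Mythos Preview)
Your proof is correct and uses the same construction as the paper (two disjoint copies joined by a single safe bridge), but the paper reduces from \stue{} itself rather than from \HC. Starting from an arbitrary \stue-instance~$(G,k)$, the paper builds the two-copy graph with budget~$2k+1$ and, for the reverse direction, simply observes that any solution must contain the bridge and then applies pigeonhole to conclude one copy admits a solution of size at most~$k$. Your route through \HC{} works too, but it forces you to redo the $2$-edge-connectivity counting argument that the paper already packaged into \Cref{lem-correspondin-equivalent}; the self-reduction is a little cleaner and, incidentally, preserves any additional structural restrictions already established for \stue{} (e.g.\ those in \Cref{cor-ust-np-h}).
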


\begin{proof}
Let~$(G=(V,S,U),k)$ be an instance of \stue.
We construct an equivalent instance~$(G',2k+1)$ of \stue{} with bisection width one as follows.
The graph consists of two copies of~$G$ and one safe edge between the two copies (the endpoints of this edge can be chosen arbitrarily).
Observe that~$G'$ has bisection width one.

If~$G$ contains a safe spanning connected subgraph of size~$k$, then we find a solution for~$G'$ of size~$2k+1$ by taking the same solution in both copies and adding the safe edge between these two solutions.
Conversely, if~$G'$ contains a safe spanning connected subgraph of size~$2k+1$, then it needs to contain the safe edge between the two copies of~$G$.
By the pigeonhole principle, there is a safe spanning connected subgraph of size at most~$k$ in one of the two copies.
This is by definition a solution of size at most~$k$ for the input graph~$G$.
\end{proof}

\section{Problem-Specific Parameters}
\label{sec-prob-related}

In this section, we study problem-specific parameters.
In particular, we show that \stue{} is fixed-parameter tractable when parameterized by the number of unsafe edges in the input graph and we study some \emph{above-lower-bound} and \emph{below-upper-bound} parameterizations for the solution size.
Such parameters are frequently studied~\cite{MR99,MRS09}.
However, since \Cref{lem-correspondin-equivalent} shows hardness for \stue{} where~$k=n$ and since each solution has at least $n-1$~edges, the parameterization above the lower bound of~$n-1$ is hopeless.
Hence, we focus on a parameterization below an upper bound.
More precisely, we show that an optimal solution of~$G$ consists of at most $2n-4$~edges.
Then, we aim to find a solution for~$G$ with $k=2n-4-\ell$~edges for small values of~$\ell$.

%\appendixsection{sec-prob-related}
%\toappendix{

%}
%Using \Cref{obs:cycle}, we can show the following.

We start with the number~$|U|$ of unsafe edges.
To this end, we start with a simple observation.
%Using \Cref{obs:cycle}, we can verify \Cref{prop-fpt-u}.
%\todo[inline]{Parameternotation: sehr oft~$k$ verwendet. Außerdem hier $\ell$ in Probdef aber $k$ als Parameter. In below upper bound I use $k$ consistently as solution size}
\begin{observation}
	\label{obs:cycle}
	Let~$(G=(V, S, U),k)$ be an instance of \stue.
	If a solution contains a cycle consisting only of safe edges, then we can remove any edge in the cycle to get another solution.
\end{observation}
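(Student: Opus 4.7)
The plan is to verify the two defining properties of an unsafe spanning connected subgraph for $T' \coloneqq T \setminus \{e\}$, where $T$ is the given solution, $C \subseteq T$ is a cycle of safe edges, and $e \in C$ is the edge we remove. By definition, we need (a) $T'$ is a connected spanning subgraph of $G$, and (b) $T' \setminus \{u\}$ is connected for every unsafe edge $u \in T'$.

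For (a), I would use the standard fact that removing an edge that lies on a cycle of a connected graph preserves connectedness. Since $T$ is connected and $e$ lies on the cycle $C \subseteq T$, the edge set $T' = T \setminus \{e\}$ is connected. Spanning is immediate because no vertex is removed and each vertex remains incident to the remaining edges of $C$ (or to the other edges of $T$ that covered it).

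For (b), let $u \in T'$ be an arbitrary unsafe edge. Because $T$ is a solution, $T \setminus \{u\}$ is connected. Since $C$ consists entirely of safe edges and $u$ is unsafe, we have $u \notin C$, so $C$ remains a cycle in $T \setminus \{u\}$. In particular, $e$ still lies on a cycle in $T \setminus \{u\}$, so removing $e$ from this connected graph preserves connectedness. Therefore $T' \setminus \{u\} = (T \setminus \{u\}) \setminus \{e\}$ is connected, as required.

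There is no real obstacle here; the only subtlety is to track the order of edge removals and to use the fact that the safe cycle $C$ survives the deletion of any single unsafe edge, so the cycle-preserving-connectivity argument applies after removing $u$ as well as before. I would present the proof as a short two-paragraph argument, one paragraph per condition, explicitly citing the observation that $u \notin C$ because $u$ is unsafe and $C$ is safe.
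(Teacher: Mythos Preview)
Your proof is correct. The paper states this observation without an explicit proof, treating it as self-evident; your two-step verification of conditions~(a) and~(b) is precisely the natural argument one would supply, and the key point---that the safe cycle~$C$ is untouched by deleting any unsafe edge~$u$, so~$e$ still lies on a cycle of~$T\setminus\{u\}$---is exactly what makes it work. The only trivial addition would be to note that~$|T'| = |T|-1 \le k$, so the budget constraint is also satisfied.
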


\begin{proposition}
\label{prop-fpt-u}
	%\stue{}  is fixed-parameter tractable when parameterized by $\ell = |U|$.
	\stue{}  is fixed-parameter tractable when parameterized by $|U|$.
\end{proposition}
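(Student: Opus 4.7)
The plan is to enumerate all $2^{|U|}$ subsets $T_U \subseteq U$ of unsafe edges that could be used in a solution, and for each such choice compute the minimum number of safe edges required to complete $T_U$ into a valid solution. The overall minimum is then compared with $k$. The outer enumeration already contributes only an FPT factor $2^{|U|}$, so the main work is solving the sub-problem for a fixed $T_U$.

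For a fixed $T_U$, a safe-edge subset $T_S \subseteq S$ completes the solution if and only if $(V, T_S \cup T_U)$ is spanning connected and for every $e = \{u_e, v_e\} \in T_U$ there is a $u_e$--$v_e$ path in $T_S \cup (T_U \setminus \{e\})$; equivalently, no edge of $T_U$ is a bridge in $(V, T_S \cup T_U)$. This imposes at most $|T_U| \le |U|$ simultaneous connectivity demands on $T_S$. A useful structural observation, used to keep the sub-problem small, is that if $G$ is connected (otherwise we reject immediately), then removing the $|U|$ unsafe edges leaves at most $|U|+1$ safe components $C_1, \ldots, C_p$, so the ``interesting'' part of the solution lives on a graph with only $\mathcal{O}(|U|)$ super-vertices after contracting each $C_i$.

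I would reduce the sub-problem to \textsc{Disjoint Subgraphs}: each unsafe edge $e \in T_U$ gives rise to a pattern subgraph (the required cycle through $e$, equivalently the alternate $u_e$--$v_e$ path), and a minimum-size realization of all $|T_U|$ patterns simultaneously yields the minimum $T_S$. Since \textsc{Disjoint Subgraphs} is FPT in the number of requested patterns, and we have at most $|U|$ patterns, the sub-problem is solvable in $f(|U|) \cdot n^{\mathcal{O}(1)}$ time. Combining this with the outer enumeration gives an overall running time of $2^{|U|} \cdot f(|U|) \cdot n^{\mathcal{O}(1)}$, which is FPT in $|U|$.

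The main obstacle I anticipate is the mismatch between ``disjoint'' realizations (the usual requirement of \textsc{Disjoint Subgraphs}) and our setting, where alternate paths for different unsafe edges may legitimately share edges. I plan to handle this by first guessing, in $f(|U|)$ time, the interaction pattern of the alternate paths — equivalently, the block structure of the solution restricted to $T_U$ — and only then applying \textsc{Disjoint Subgraphs} to the resulting truly-disjoint sub-instances on an auxiliary graph where shared fragments have been identified.
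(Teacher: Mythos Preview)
Your outer loop (enumerate $T_U \subseteq U$) and the final tool (Robertson--Seymour's \textsc{Disjoint Connected Subgraphs}) match the paper exactly. The gap is in the middle step, which you leave as ``guess the interaction pattern of the alternate paths'' / ``the block structure of the solution restricted to $T_U$'' without saying what this object is, why it has only $f(|U|)$ possibilities, or how it turns the sub-problem into a well-posed disjoint-subgraphs instance. Framing each unsafe edge as a ``cycle pattern'' is the wrong granularity: those cycles are precisely the objects that are \emph{not} disjoint, and no amount of guessing their pairwise overlaps gives you a bounded-size description unless you identify the right invariant.

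The paper fills this gap with one concrete observation: since any cycle of safe edges in a solution can be broken (Observation~\ref{obs:cycle}), the safe part $T_S$ of an optimal solution is a \emph{forest}. Hence the object to guess is simply the partition of the set $F$ of endpoints of $T_U$ (so $|F|\le 2|U|$) according to which tree of that forest they lie in. There are at most $B(2|U|)$ such partitions; for each one, whether the resulting edge set is a valid unsafe spanning connected subgraph is fully determined by $T_U$ together with the partition, and realizability of the partition in $(V,S)$ is exactly one \textsc{Disjoint Connected Subgraphs} instance with the parts as terminal sets. This also dissolves the minimization issue your sketch leaves open: once $T_S$ is a forest, $|T_S|=n-c$ where $c$ is the number of parts, so you never need a minimum-size realization from the disjoint-subgraphs routine---feasibility per guessed partition suffices, and the enumeration over partitions ranges over all possible values of $|T_S|$. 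Your remark that $G-U$ has at most $|U|+1$ safe components is true but not needed, and does not by itself yield the forest structure of $T_S$.
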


\begin{proof}
We set~$\ell\coloneqq |U|$.
We start by guessing\footnote{Whenever we pretend to guess something, we iterate over all possibilities. We then check whether the current iteration leads to a solution. If it does, then we say that the guess was correct. Note that there is a solution if and only if there is a correct guess.} the set~$K \subseteq U$ of unsafe edges in some optimal solution~$\SSS$.
This can be done in~$\Oh(2^\ell)$ time.
Let~$F$ be the set of vertices that are an endpoint of an edge in~$K$, that is, $F = \{v \mid \exists\,e \in K.\ v \in e\}$.
Note that~$|F| \leq 2|K| \leq 2\ell$.
Moreover, let~$J = \SSS \setminus K$ be the set of safe edges in the solution~\SSS{} and let~$G_S = (V,S)$ and~$G' = (V, J)$ be the graphs that contain all vertices and a) all safe edges and b) the edges in~$J$, respectively.
By \cref{obs:cycle}, we may assume that~$G'$ is a forest.
Note that since~$G'$ is a forest, its number~$|J|$ of edges is~$n - c$ where~$c$ is the number of connected components in~$G'$.
Moreover, each connected component in~$G'$ has to contain at least one vertex in~$F$ (assuming~$F \neq \emptyset$)\footnote{Note that if~$F = \emptyset$, then~$K = \emptyset$ and \stue{} boils down to the classic polynomial-time solvable \textsc{Spanning Tree} problem.} as otherwise any vertex in the connected component is not connected to any vertex in~$F$ in~$G_S$, a contradiction to~$\SSS$ being a solution.
Hence, we next guess which vertices in~$F$ belong to the same connected components in~$G'$, that is, we guess a partition of the vertices in~$F$.
This takes~$B(|F|) \leq B(2\ell)$ time where~$B(i)$ is the~$i$\textsuperscript{th} Bell number.
It remains to check the following three points:
\begin{enumerate}
	\item Is there a way to connect all vertices in~$F$ that are guessed to be in the same connected component in~$G'$ such that these connected components are pairwise distinct?
	\item Can we connect all remaining vertices to at least one of these connected components?
	\item And is the resulting graph actually a solution of \stue?
\end{enumerate} 

Note that we can check the second and third point independent of the first point as the second point is equivalent to the question ``Is~$(V,S \cup K)$ a connected graph?'' and the third point is true if and only if each pair of vertices in~$F$ is safely connected.
The latter is completely determined by~$K$ and which vertices in~$F$ belong to the same connected component in~$G'$, that is, the guessed partition of~$F$.
It remains to check whether there is a set of trees in~$G_S$ to connect all vertices in~$F$ that are guessed to be in the same connected component in~$G'$ such that these connected components are pairwise distinct.
This problem is also known as \textsc{Disjoint Connected Subgraphs}. 
Robertson and Seymour~\cite{RS95} showed that there exists a computable function\footnote{Unfortunately, not much is known about the function~$f$ other than that it is computable.}~$f$ such that \textsc{Disjoint Connected Subgraphs} can be solved in~$f(k') \cdot n^3$ time, where~$k'$ is the number of terminals to be connected.
We can use this algorithm to check the first point in~$f(2\ell) \cdot n^3$ time.
Note that the overall running time is~$\Oh(2^\ell \cdot B(2\ell) \cdot f(2\ell) \cdot n^3)$, which is fixed-parameter tractable for the parameter~$\ell$.
\end{proof}

%We continue by showing that \stue{} admits an FPT-time algorithm with respect to~$\ell = 2n - 4 - k$.
We continue by presenting an FPT-time algorithm for the below-upper-bound parameter~$\ell \coloneqq 2n - 4 - k$ for \stue{}.
We start with some additional required notation.
A vertex~$v$ is called a \emph{cut-vertex} if~$G-v$ has more connected components than~$G$.
Graphs without cut-vertex are called \emph{$2$-vertex connected}.
An edge~$e$ is a \emph{bridge} if~$G\setminus\{e\}$ has more connected components than~$G$.
Observe that if there is a bridge in~$G$ which is unsafe, then there does not exist any safe spanning connected subgraph of~$G$.
Hence, in the following, we assume that each bridge of~$G$ corresponds to a safe edge. 
A graph without bridge is called \emph{$2$-edge connected}.

As previously discussed, our algorithm is inspired by an algorithm of Bang-Jensen and Yeo~\cite{BY08} for \textsc{Minimum Spanning Strong Subgraph}.
Recall that in this problem, one is given a digraph~$D$ and one wants to find a minimum spanning strong digraph of~$D$.
Each solution for this problem consists of at most $2n-2$~arcs.
%Bang-Jensen and Yeo~\cite{BY08} studied parameterization below this upper bound.
%More precisely, they provided an FPT-time algorithm for the difference~$\ell$ if one searches for a solution with $2n-2-\ell$~arcs.
Bang-Jensen and Yeo~\cite{BY08} provided an FPT-time algorithm for the difference~$\ell$ if one searches for a solution with $2n-2-\ell$~arcs.
An important tool for the FPT-time algorithm are ear-decompositions~\cite{West00}.
Note that a graph~$G$ has an ear-decomposition if and only if~$G$ is $2$-edge connected~\cite{R39}.

\begin{definition}[\cite{West00}]
\label{def-ear-decomp}
An \emph{ear decomposition} of an undirected $2$-edge connected graph~$G$ is a sequence~$\mathcal{E}\coloneqq (P_0, P_1, \ldots, P_t)$, where~$P_0$ is a cycle, and each~$P_i$ for~$i\ge 1$ is a path or a cycle fulfilling the following properties:
\begin{enumerate}
\item $P_i$ and~$P_j$ are edge-disjoint for~$i\ne j \in[0,t]$.

\item For~$i\in[t]$ we denote by~$G_i$ the graph with vertex set~$V(G_i)\coloneqq \bigcup_{j \in [0,i]} V(P_j)$ and edge set~$E(G_i)\coloneqq  \bigcup_{j \in [0,i]} E(P_j)$.
If~$P_i$ is a cycle, then~$P_i$ has exactly one common vertex with~$V(G_{i-1})$, and if~$P_i$ is a path, then the intersection of~$V(P_i)$ and~$V(G_{i-1})$ consists of both endpoints of~$P_i$.

\item $G_t = G$.
\end{enumerate}
\end{definition}

Each~$P_i$ is denoted as an \emph{ear}.
The \emph{size} of an ear is the number of edges of the ear.
If the size of~$P_i$ is one, then we say that~$P_i$ is \emph{trivial}.
Otherwise, we say that~$P_i$ is \emph{non-trivial}.
Observe that we may assume that in any ear decomposition all non-trivial ears appear before any trivial ears.
Recall that we assume that each bridge of~$G$ is a safe edge.
In the following, we only consider graphs that are $2$-edge connected.
We will later generalize the result to graphs that are not $2$-edge connected.
Observe that each cycle of a $2$-edge connected graph can serve as ear~$P_0$ in some ear decomposition.

The following lemma follows directly by \Cref{def-ear-decomp}.

\begin{lemma} %[\appref{lem-ear-decomp-triv-sol}]
\label{lem-ear-decomp-triv-sol}
Let~$G$ be a $2$-edge connected graph and let~$\mathcal{E}=(P_0,\ldots, P_q,P_{q+1},\ldots, P_t)$ be an ear decomposition of~$G$ such that~$P_{q+1}, \ldots, P_t$ are the trivial ears of~$\mathcal{E}$.
Then, the edges of~$G$ corresponding to~$T=E(P_0)\cup\ldots\cup E(P_q)$ are a safe spanning connected subgraph for~$G$.
\end{lemma}
%\appendixproof{lem-ear-decomp-triv-sol}{
%\eardecomp*
\begin{proof}
We have to show that~$T-\{e\}$ is a connected spanning subgraph for each unsafe edge~$e$ in~$U$.
Since each vertex has at least two incident edges in~$T$, we only need to show that~$T - \{e\}$ is connected for each edge~$e \in T$.
We verify this statement inductively.
Clearly, the statement is true for~$P_0$.
In the following, we assume that the statement is shown for~$G_{i-1}$, the graph obtained from ears~$P_0$ to~$P_{i-1}$.
Next, we distinguish whether~$P_i$ is a cycle or a path.
%We only show the statement for the case that~$P_i$ is a cycle; the case that~$P_i$ is a path can be handled similarly.
If~$P_i$ is a cycle, then the only interesting vertex pairs are those where one endpoint~$u$ is in~$V(G_{i-1})$ and the other~$v$ is in~$P_i$.
Without loss of generality, assume that~$w=V(G_{i-1})\cap P_i$.
Then, since the statement is true for~$V(G_{i-1})$, the vertices~$u$ and~$w$ are safely connected.
Furthermore, since~$P_i$ is a cycle,~$v$ and~$w$ are safely connected via vertices in~$V(G_i)$.
Thus, $u$ and~$v$ are safely connected via vertices in~$P_i$.
If~$P_i$ is a path, then let~$u$ and~$v$ be the endpoints of~$P_i$.
By the induction hypothesis, in~$G_{i-1}$ each vertex is safely connected to~$v$.
Hence, the removal of any edge in~$E(G_{i-1})$ does not disconnect any vertex from~$v$ and this also holds in~$G_i$ as each vertex in~$P_i$ is connected to~$v$ via~$P_{i}$.
The removal of any edge in~$P_i$ does also not disconnect any vertex from~$v$ as all vertices that are no longer reachable from~$v$ via~$P_i$ are still connected to~$u$ and~$u$ is safely connected to~$v$.
This concludes the induction step and the entire proof.
\end{proof}
%}

\begin{lemma} %[\appref{lem-ear-decomp-lage-ears}]
\label{lem-ear-decomp-lage-ears}
Let~$G$ be a $2$-edge connected graph, let~$(P_0, \ldots, P_t)$ be the non-trivial ears of an ear decomposition of~$G$, and let~$\mathcal{P}$ be the set of ears~$P_i$ with~$i\ge 1$ with~$V(P)\ge 3$.
Then, $G$ has a safe spanning connected subgraph with at most ${2n-(|V(P_0)|+|\mathcal{P}|)}$~edges.
\end{lemma}

%\begin{comment}
%\Cref{lem-ear-decomp-lage-ears} can be proven analogously to a statement in digraphs presented by Bang-Jensen and Yeo~\cite[Lemma~2.3]{BY08}.
%We give the proof for completeness.
%\end{comment}

%\appendixproof{lem-ear-decomp-lage-ears}{
%\largeears*
\begin{proof}
By \Cref{lem-ear-decomp-triv-sol},~$(P_0, \ldots, P_t)$ is a safe spanning connected subgraph~$T$ for~$G$ with~$|E(P_0)|+\ldots +|E(P_t)|$~edges on $n$~vertices.
Observe that~$P_0$ contributes exactly~$|P_0|$~vertices and edges to~$T$.
Furthermore, each~$P_i$ for~$i\in[t]$ contributes exactly~$|P_i|-1$~vertices and $P_i$~edges to~$T$.
Hence,~$t=|E(T)|-|V(T)|$.

Since each ear in~$\mathcal{P}$ contributes at least two vertices to~$F$ end each remaining ear exactly one vertex, we conclude that~$n\ge |V(P_0)|+2|\mathcal{P}|+(t-|\mathcal{P}|)$.
Hence, we obtain that~$|E(T)|=|V(T)|+t\le n + (n-|V(P_0)|-|\mathcal{P}|)=2n-(|V(P_0)|+|\mathcal{P}|)$.
\end{proof}
%}

Next, we use \Cref{lem-ear-decomp-lage-ears} to provide an upper bound for the size of any safe spanning connected subgraph for \stue.

\begin{proposition} %[\appref{ref-ub-solution-ust}]
\label{ref-ub-solution-ust}
Each $2$-edge connected graph~$G$ with $n\ge 4$~vertices has a safe spanning connected subgraph with at most $2n-4$~edges.
\end{proposition}
%\appendixproof{ref-ub-solution-ust}{
%\ubsol*
\begin{proof}
Let~$\mathcal{E}=(P_0,\ldots, P_t)$ be an ear decomposition of~$G$  such that~$P_0$ is a longest cycle of~$G$.
\Cref{lem-ear-decomp-lage-ears} implies that if~$|V(P_0)|\ge 4$, there exists a safe spanning connected subgraph with at most $2n-4$~edges.
Hence, in the following, we assume that each cycle in~$G$ has length~$3$; and thus in particular~$P_0$ is a triangle.
If there is no~$P_1$, then~$G$ has only $3$~vertices, a contradiction to our assumption that~$n\ge 4$.
Thus, in the following, we assume that there exists some~$P_1$.
Next, we distinguish whether~$P_1$ is a cycle or a path. 

First assume that~$P_1$ is a path. 
Then,~$G[P_0\cup P_1]$ has a cycle of length at least~$4$, a contradiction to our assumption that each cycle in~$G$ has length~$3$.

Second, assume that~$P_1$ is a cycle.
Since each cycle in~$G$ is a triangle, also~$P_1$ is a triangle.
Note that~$G[P_0\cup P_1]$ has $5$~vertices and $6$~edges.
Since each non-trivial ear with $r$~vertices increases the number of edges by exactly~$r+1$, we obtain a safe spanning connected subgraph with at most $2n-4$~edges.
\end{proof}
%}

We next show that in yes-instances, the size of each $2$-vertex connected component can be bounded by a function in~$\ell$.

\begin{lemma}
\label{lem-2-vertex-connected-component}
Let~$G$ be a $2$-vertex connected graph with at least $4$~vertices.
In polynomial time, we can either correctly decide whether~$(G,k)$ is a yes-instance or a no-instance, or compute an equivalent instance~$(G',k')$ with $8\ell^2+31\ell+28$~vertices.
\end{lemma}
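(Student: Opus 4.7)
The plan is to compute an ear decomposition of $G$ and use it either to certify $(G, \ell)$ as a yes-instance or to reduce $G$ to an equivalent instance of bounded size. Since $G$ is $2$-vertex-connected, an ear decomposition $(P_0, P_1, \ldots, P_t)$ exists (cf.\ \Cref{def-ear-decomp}), and I would compute one greedily: choose $P_0$ to be a longest cycle of $G$, then select each subsequent ear $P_i$ of maximum size. By \Cref{lem-ear-decomp-lage-ears} such a decomposition yields an unsafe spanning connected subgraph of size at most $2n - (|V(P_0)| + |\mathcal{P}|)$. Hence if $|V(P_0)| + |\mathcal{P}| \ge 4 + \ell$ we directly obtain a solution with at most $2n - 4 - \ell = k$ edges and return that $(G, \ell)$ is a yes-instance.

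Otherwise, $|V(P_0)| \le 3 + \ell$ and $|\mathcal{P}| \le 4 + \ell$. Since $P_0$ is a longest cycle and every ear $P_i$ can be closed into a cycle by appending a path in $G_{i-1}$, each such $P_i$ satisfies $|V(P_i)| \le |V(P_0)|$. Consequently the vertex set covered by $P_0$ together with the ears in $\mathcal{P}$ has size $\Oh(\ell^2)$; call its vertices the \emph{backbone}. All remaining ears of the decomposition are either trivial or \emph{short}, that is, path ears of length $2$ that contribute exactly one new degree-$2$ internal vertex, and the endpoints of short ears lie in the already-built vertex set.

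To bound the total number of vertices I would apply a reduction rule to short ears: if many short ears share the same pair of endpoints $(u,v)$, all but a bounded number of them can be removed together with their incident degree-$2$ internal vertex while decreasing $\ell$ accordingly (using that the extra parallel length-$2$ paths are redundant for maintaining connectivity after the failure of any unsafe edge). After exhaustively applying this reduction the graph has $\Oh(\ell^2)$ vertices, and a careful accounting yields the explicit bound $8\ell^2 + 31\ell + 28$. The main technical obstacle is the design and correctness of this reduction rule: each short ear consists of two edges which may independently be safe or unsafe, so the case analysis splits into several subcases distinguished by the combinations of edge types involved at a given pair $(u,v)$, and each subcase must be shown to preserve the existence of an unsafe spanning connected subgraph of the required size. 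This reflects the adaptation difficulty already noted by the authors when transferring the ear-decomposition argument of Bang-Jensen and Yeo~\cite{BY08} from digraphs to the present two-edge-type undirected setting.
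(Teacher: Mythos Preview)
Your proposal has a fatal algorithmic issue and a structural gap.

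\textbf{Longest cycle.} You begin by choosing $P_0$ to be a longest cycle of~$G$, but computing a longest cycle is NP-hard (it subsumes \HC). The lemma promises a polynomial-time procedure, so this step cannot be carried out as stated. The paper avoids this entirely: it starts from \emph{any} cycle of length at least~$3$, greedily appends ears of size at least~$3$, and then obtains a \emph{linear} bound~$|X|\le 2\ell+4$ on the backbone via a different counting argument. Namely, after showing~$Y=V(G)\setminus X$ is independent, the ear decomposition completed by length-$2$ ears for each~$y\in Y$ has exactly~$2|Y|+|X|+|\mathcal{P}|=2n-|X|+|\mathcal{P}|$ edges; hence if~$|X|-|\mathcal{P}|-4\ge\ell$ one already has a yes-instance, and otherwise~$|X|\le\ell+|\mathcal{P}|+3\le 2\ell+4$. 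Your route instead bounds each ear by~$|V(P_0)|\le\ell+3$, giving only~$|X|\in\Oh(\ell^2)$. Apart from relying on the NP-hard step, this weaker bound then forces~$\binom{|X|}{2}\in\Oh(\ell^4)$ endpoint pairs for the short ears, so the ``careful accounting'' cannot recover the stated~$8\ell^2+31\ell+28$.

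\textbf{The reduction on short ears.} Grouping short ears by ``the same pair of endpoints~$(u,v)$'' is not well-defined: a vertex~$y\in Y$ may have many neighbours in~$X$, and an optimal solution may attach~$y$ to a pair different from the one used in your ear decomposition. The paper does not reduce per endpoint pair; it partitions~$Y$ into~$Y_S$ (vertices with a safe edge) and~$Y_U$ (only unsafe edges). For~$Y_S$ a direct pigeonhole over~$\binom{|X|}{2}$ pairs suffices. For~$Y_U$ the argument is more global: build a bipartite graph between~$Y_U$ and pairs in~$X$, iterate Hall's condition to peel off Hall violators, and use the resulting matching to show (Claim~\ref{claim-bound-size-yu}) that all vertices outside a set of size~$2\binom{|X|}{2}$ can be made to have degree exactly~$2$ in some optimal solution without being needed for connectivity among the rest, and can therefore be deleted. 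Your proposal acknowledges this as the ``main technical obstacle'' but does not supply the mechanism; the case analysis over safe/unsafe edge types you sketch is not what is actually required.
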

\begin{proof}
Since~$G$ has at least $4$~vertices and since~$G$ is $2$-vertex connected,~$G$ contains a cycle~$C$ of length at least~$3$.
Let~$P_0\coloneqq C$ be the initial cycle of an ear decomposition of~$G$.
Next, we add to~$P_0$ a maximal sequence~$\mathcal{P}=(P_1,\ldots ,P_s)$ of ears of size at least~$3$.
We denote the vertex set of all these ears by~$X=V(P_0)\cup V(P_1)\cup \ldots \cup V(P_s)$.
Next, we show the following property of~$Y\coloneqq V(G)\setminus X$.

\begin{claim}
\label{claim-y-is}
$Y$ is an independent set.
\end{claim}
\begin{claimproof}
Assume towards a contradiction that there is an edge~$\{v,w\}$ where both endpoints are contained in~$Y$.
Let~$Y'$ be the set of vertices in~$Y$ which have degree at least one in~$G[Y]$.
Observe that if no vertex in~$Y'$ has a neighbor in~$X$, then~$G$ is not connected, a contradiction.
Thus, there exists at least one vertex~$y\in Y'$ which has a neighbor~$x\in X$.
Since~$y\in Y'$, there exists at least one neighbor~$w$ of~$y$ in~$Y'$.

Since~$G$ is 2-vertex connected, there is a path~$P$ from~$w$ to~$x$ that does not include~$y$.
Let~$z$ be the first vertex in~$V(P) \cap X$.
Then, the edges~$\{x,y\}$ and~$\{y,w\}$ plus the subpath of~$P$ from~$w$ to~$z$ form an ear of size at least~$3$, a contradiction to our assumption that we added a maximal sequence of ears of size at least~$3$.
\end{claimproof}

Next, we show that we can assume that~$|X|\le 2\ell +4$ since otherwise the instance is trivial.
According to \Cref{lem-ear-decomp-lage-ears}, we know that~$G$ has a safe spanning connected subgraph with at most $2n-(|V_{P_0}|+|\mathcal{P}|)=2n-4-(|V_{P_0}|+|\mathcal{P}|-4)$~edges.
If~$|V(P_0)|+|\mathcal{P}|-4\ge\ell$, then~$(G,k)$ is a yes-instance.
Hence, in the following, we assume that~$|V(P_0)|+|\mathcal{P}|-4<\ell$.
Furthermore, since~$|V(P_0)|\ge 3$, we obtain that~$|\mathcal{P}|<\ell+1$.
Since~$Y$ is an independent set by~\Cref{claim-y-is}, we obtain a safe spanning connected subgraph for~$G$ by adding an ear of size~$2$ to~$\mathcal{P}$ for each vertex in~$Y$.
Note that the resulting safe spanning connected subgraph has exactly $2|Y|+|X|+|\mathcal{P}|$~edges.
Since~$|X|+|Y|=n$, we obtain  that this safe spanning connected subgraph has $2n-|X|+|\mathcal{P}|=2n-4-(|X|-|\mathcal{P}|-4)$~edges.
Now, if~$|X|-|\mathcal{P}|-4\ge\ell$, we observe that~$(G,k)$ is a yes-instance.
Thus, we may assume that~$|X|-|\mathcal{P}|-4<\ell$.
This implies that~$|X|\le\ell+|\mathcal{P}|+3\le 2\ell+4$.

Next, we show that we can also bound the size of~$Y$ in a function only depending on~$\ell$.
To this end, we partition~$Y$ into the set~$Y_U$ of vertices which are incident with only unsafe edges and the set~$Y_S$ of vertices which are incident with at least one safe edge.

First, we show that if~$|Y_S|\ge 2\cdot \binom{|X|}{2}+\ell$, then~$(G,k)$ is a yes-instance. 
Note that since each vertex in~$Y_S$ is incident with at least one safe edge, each vertex of~$Y_S$ can be safely connected with~$X$ by one edge.
Also note that if there is a safe spanning connected subgraph such that for vertices~$x_1,x_2\in X$ and~$y_1,y_2,y_3\in Y_S$ the edges~$\{x_i,y_j\}$ for~$i\in[2]$ and~$j\in[3]$ are contained in this safe spanning connected subgraph, then we can construct a safe spanning connected subgraph with one less edge:
we remove the edges~$\{x_1,y_3\}$ and~$\{x_2,y_3\}$ from this safe spanning connected subgraph and add some safe edge which is incident with~$y_3$.
In other words, we may assume that for each pair~$x_1$ and~$x_2$ of vertices in~$X$ there exists at most $2$~vertices in~$Y_S$ having~$x_1$ and~$x_2$ as neighbors in the safe spanning connected subgraph.
Thus, if~$|Y_S|\ge 2\cdot\binom{|X|}{2}+\ell$, then there exists a safe spanning connected subgraph with at most $2n-4-\ell$~edges.

It remains to show that we can also bound the size of~$Y_U$.
To this end, we construct a new bipartite graph~$H$ as follows:
Let~$Z$ be a vertex set where we add two vertices~$z_{ux}$ and~$z_{xu}$ for each pair of different vertices~$u,x\in X$ such that there exists at least one vertex~$y\in Y_U$ such that~$u,x\in N(y)$.
Note that we add two vertices per pair for technical reasons, discussed later.
Furthermore, we add edges~$\{y,z_{ux}\}$ and~$\{y,z_{xu}\}$ whenever~$u,x\in N(y)$.
Next, apply the following procedure to shrink the size of~$H$:

Initially, set~$i\coloneqq 0$,~$Z_0\coloneqq Z$,~$Y_0\coloneqq Y_U$, and for each~$i$, let~$H_i\coloneqq H[Z_i\cup Y_i]$.
If~$H_i$ has a matching~$M$ saturating~$Z_i$, or~$Z_i$ is empty, we stop the procedure, and denote the endpoints of~$M$ in~$Y_U$ and~$Z$ by~$M_Y$ and~$M_Z$, respectively.
Otherwise, if there is no matching saturating~$Z_i$, then there exists by Hall's theorem~\cite{West00} some~${Z'\subseteq Z_i}$ such that~$|N_{H_i}(Z')|<|Z'|$.
In this case, let~$U_{i+1}\coloneqq Z'$,~$W_{i+1}\coloneqq N_{H_i}(U_i)$, ${Y_{i+1}\coloneqq Y_i\setminus W_{i+1}}$, $Z_{i+1}\coloneqq Z_i\setminus U_{i+1}$, set~$i\coloneqq i+1$, and repeat the above procedure for the new graph~$H_{i}$.

Assume the above procedure ends after $r$~steps.
We have generated disjoint subsets~$U_1, \ldots , U_r$ of~$Z$,  disjoint subsets~$W_1, \ldots ,W_r$ of~$Y_U$, a set~$M_Z=Z\setminus (U_1\cup\ldots\cup U_r)$, and a set~$M_Y\subseteq Y_U\setminus(W_1\cup \ldots\cup W_r)$.
Note that all these sets might be empty.
We define~$Y'\coloneqq Y_U\setminus (M_Y\cup W_1\cup\ldots\cup W_r)$.

Next, we distinguish whether~$M_Y=\emptyset$ or not.
First, we consider the case where~${M_Y=\emptyset}$.
Recall that each vertex in~$Y_U$ is incident with only unsafe edges. 
Thus, we can safely assume that each vertex in~$Y_U$ has at least $2$~neighbors, as otherwise~$(G,k)$ is a trivial no-instance.
Hence from the procedure above, we can conclude that~$Y'=\emptyset$ in this case.
Thus,~$|Y_U|<|Z|\le 2\binom{|X|}{2}$ and hence the size of~$Y$ is bounded by a function only depending on~$\ell$.
In the following, we therefore assume that~$M_Y\ne\emptyset$. 
We use the following claim to shrink the size of~$Y_U$.

\begin{claim} %[\appref{claim-bound-size-yu}]
\label{claim-bound-size-yu}
There exists a safe spanning connected subgraph~$T$ of~$G$ such that all vertex pairs in~$T-Y'$ are safely connected and for each vertex~$y$ in~$Y'$ the safe spanning connected subgraph~$T$ contains exactly $2$~edges which are incident with~$y$.
\end{claim}
%\appendixproof{claim-bound-size-yu}{
%\boundedyu*
\begin{claimproof}
Let~$T$ be a safe spanning connected subgraph for~$G$. 
Clearly, if each vertex pair in~$T-Y'$ is safely connected, we are done since for each vertex~$y$ in~$Y'$,~$T$ contains exactly $2$~edges which are incident with~$y$.
Otherwise, we do the following to construct a new safe spanning connected subgraph~$T'$ based on~$T$:

First, we show that for two vertices~$u,x\in X$ we either have~$z_{ux},z_{xu}\in M_Z$ or~$z_{ux},z_{xu}\not\in M_Z$.
Suppose exactly one of these vertices, say~$z_{ux}$, is contained in~$M_Z$ and the other vertex~$z_{xu}$ is contained in some~$U_i$.
Note that each vertex in~$Y$ which is adjacent to~$z_{ux}$ is also adjacent with~$z_{xu}$, and vice versa.
Thus,~$z_{ux}$ cannot be matched in~$M_Z$, a contradiction to the definition of~$M_Z$.
Thus, either~$z_{ux},z_{xu}\in M_Z$ or~$z_{ux},z_{xu}\not\in M_Z$.

With the above property at hand, we are now ready to construct a safe spanning connected subgraph~$T'$ based on~$T$:
First, we remove all edges from~$T$ which have an endpoint in~$M_Y$.
Second, let~$M_Z=\{z_1,\ldots, z_{|M_Z|}\}$ such that~$z_i=z_{u_ix_i}$ and~${M_Y=\{y_1,\ldots, y_{M_Z}\}}$ such that~$\{\{z_1,y_1\},\ldots,\{z_{|M_Z|},y_{|M_Z|}\}\}$ form a matching in~$H$.
We add the edges~$\{y_i,u_i\}$ and~$\{y_i,x_i\}$ for each~$i\in[|M_Z|]$ to~$T'$.

Recall that for each two vertices~$u,x\in X$ we either have~$z_{ux},z_{xu}\in M_Z$ or~$z_{ux},z_{xu}\not\in M_Z$.
Observe that by adding the edges~$\{y_i,u_i\}$ and~$\{y_i,x_i\}$ for each~${i\in[|M_Z|]}$, we add all possible connections between vertices in~$X$ which are possible by vertices in~$M_Y$.
Since~$M_Z$ contains $2$~vertices corresponding to each vertex pair matched by~$M_Z$, all connections which are possible via~$M_Y$ are safe.
Thus,~$T'$ is a safe spanning connected subgraph and has the same number of edges as~$T$.
Note that in the above argument, it was necessary to have $2$~vertices in~$Z$ per pair of vertices in~$X$ to ensure that not only one unsafe connection exists.
Since no vertex in~$Z-M_Z$ is adjacent to any vertex in~$Y'$, we conclude that all vertex pairs in~$T'-Y'$ are safely connected and thus the claim follows.
\end{claimproof}
%}

Hence, we can safely remove all vertices in~$Y'$ and obtain that~$|Y_U|\le 2\binom{|X|}{2}$.
Thus, we can obtain an equivalent instance in which the number of vertices is at most \[|X|+|Y_S|+|Y_U|\le |X|+2\binom{|X|}{2}+\ell+2\binom{|X|}{2}=8\ell^2+31\ell+28.\qedhere\]\end{proof}

Now, we can extend \Cref{lem-2-vertex-connected-component} to graphs which are not $2$-vertex connected.
The idea is that if the number of $2$-vertex connected components or $2$-edge connected components is more than~$\ell$, we have a yes-instance.
Then, with the help of \Cref{lem-2-vertex-connected-component} we can show that each $2$-edge connected component is small.

%\begin{observation}
%\label{obs-comb-2-vertex-compos}
%Let~$G$ be a graph with a cut vertex~$v$ and let~$X_1, \ldots, X_t$ be the connected components of~$G-v$.
%Every safe spanning connected subgraph~$T$ of~$G$ induces a safe spanning connected subgraph for the graph~$G[X_i\cup\{v\}]$ for each~$i\in[t]$.
%Furthermore, the union of all~$i\in[t]$ of the safe spanning connected subgraph~$T_i$ for~$G[X_i\cup\{v\}]$ is a safe spanning connected subgraph for~$G$.
%\end{observation}
%
%We can now show the FPT-time algorithm for~$\ell$.

\begin{theorem}
\label{thm-fpt-for-2n-ell}
\stue{} can be solved in~${\Oh(\ell^{9\ell}\cdot\poly(n)})$~time, where~$\ell\coloneqq 2n-4-k$.
\end{theorem}

\begin{proof}
We will first reduce any graph to a~$2$-edge connected graph.
Recall that we can assume that each bridge of~$G$ is a safe edge as otherwise~$(G,k)$ is a no-instance.
Note that each such safe bridge~$e$ must be part of a solution and as~$e$ provides a safe connection between its two endpoints, the entire graph~$G$ contains a solution of size~$k$ if and only if the two connected components in~$G- \{e\}$ have solutions of size~$k_1$ and~$k_2$ respectively, where~$k_1+k_2+1 \leq k$.
Moreover, $G$ can contain at most~$\ell + 4$ such bridges or it is a trivial yes-instance as shown next.

We show that if~$k \geq 2n - p$ (and~$n \geq 1$), then~$(G,k)$ is a yes-instance where~$p$ is the number of bridges in~$G$.
We do so via induction over~$p$.
Initially, we consider the case~$p=0$.
If~$n=1$, then there is a solution of size~$0 < 2n$.
If~$n=2$, then the edge between the two vertices is a bridge, a contradiction to the assumption~$p=0$.
If~$n=3$, then all three edges between the three vertices exist and if~$k > 2n = 6$, then taking all triangles in the solution shows that~$(G,k)$ is a yes-instance.
Otherwise, if~$n \geq 4$, then the statement follows from \Cref{ref-ub-solution-ust}.

For the induction step, assume that~$G$ has~$p > 0$ bridges,~$k \geq 2n - p$, and the statement holds for all graphs with at most~$p-1$ bridges.
Let~$e$ be any (safe) bridge in~$G$.
Note that removing~$e$ creates a connected component with~$n_1$ and~$n_2$ vertices, respectively, such that~$n = n_1+ n_2$.
Moreover, the numbers~$p_1$ and~$p_2$ of bridges in the two connected components satisfy~$p = p_1 + p_2 + 1$ and therefore~$\max(p_1,p_2) < p$.
Let~$k_1 = 2n_1 - p_1$ and~$k_2 = 2n_2 - p_2$.
By the induction hypothesis, there exists solutions of size~$k_1$ and~$k_2$ for the two connected components in~$G -\{e\}$, respectively.
Taking these two solutions plus the bridge~$e$ results in a solution of size~$k_1+k_2+1 = 2\cdot(n_1+n_2) - (p_1+p_2) + 1= 2n - (p-1) + 1 = 2n-p$.
This concludes the induction proof.
Now, assume that the number~$p$ of bridges is at least~$\ell + 4$.
By definition of~$\ell$ it holds that~$k = 2n - 4 - \ell \geq 2n - p$ and hence the instance is a yes-instance as claimed.

We next show that each $2$-connected component~$C$ of~$G$ can be replaced by an equivalent instance consisting of at most~$8\ell^3+31\ell^2+28\ell$~vertices.
Note that if this component contains no cut-vertex, then the result follows directly from \Cref{lem-2-vertex-connected-component}.
Hence, let~$v$ be a cut-vertex in~$C$ such that the connected components of~$C-\{v\}$ are~$X_1,\ldots, X_t, T_1, \ldots, T_z$, where each~$X_i$ has size at least~$4$ and each~$T_j$ has size at most~$3$.
We let~$C_i\coloneqq C[X_i\cup\{v\}]$ for~$i\in[t]$,~$C'=C[X_1\cup\ldots\cup X_t\cup\{v\}]$, and~${H_j\coloneqq C[T_j\cup\{v\}]}$ for~${j\in[z]}$.
Let~$T_i$ be a safe spanning connected subgraph for~$C_i$ such that~$T_i$ has the minimal number of edges.
Let~$Q_j$ be similarly defined for~$H_j$.
Note that~$E(T)\coloneqq \bigcup_{i\in[t]}E(T_i)\cup\bigcup_{j\in z}E(Q_j)$ is a safe spanning connected subgraph~$T$ for~$C$ as each vertex is safely connected to~$v$.
Furthermore, let~$\ell_i = 2\cdot|V(C_i)|-4-|E(T_i)|$.
We let~$\ell'\coloneqq \ell_1+\ldots +\ell_t$.
We obtain the following:
\begin{align*}
|E(T)| &= \sum_{i=1}^t(2\cdot|V(C_i)|-4-\ell_i) + \sum_{j=1}^z |E(Q_j)| \\
	&= 2\cdot|V(C')|-4-(2t-3+\ell') + \sum_{j=1}^z (|E(Q_j)|+2\cdot(|V(Q_j)|-1-|V(Q_j)|+1)) \\
	&= 2\cdot|V(C)|-4-(2t-3+\ell') + \sum_{j=1}^z (|E(Q_j)|-2\cdot|V(Q_j)|+2) \\
	&\le 2\cdot|V(C)|-4-(2t-3+\ell') -z
\end{align*}

Note that the last inequality follows from the fact that for connected component of size at most~$3$ we have~$|E(Q_j)|-2\cdot|V(Q_j)|+2<0$.
Note that if~${\ell'+2t-3+z\ge\ell}$, then~$(C,k)$ is a yes-instance of \stue.
Hence, in the following, we may assume that~$\ell'+2t-3+z<\ell$.
This implies that if~${C-\{v\}}$ has at least $\ell$~connected components of size at most~$3$, then~$(C,k)$ is a yes-instance.
By using the above argument iteratively on the $2$-vertex connected components of~$C_i$ (the $2$-vertex connected components of size at least~$4$), we observe that if the number of $2$-vertex connected components in~$C$ is at least~$\ell$, then~$(C,k)$ is a yes-instance.
Thus, in the following, we may safely assume that the number of $2$-vertex connected components is at most~$\ell$.
According to \Cref{lem-2-vertex-connected-component}, we can replace each $2$-vertex connected component of~$C$ by an equivalent instance with at most~$8\ell^2+31\ell+28$~vertices.
Hence,~$C$ can be replaced by an equivalent instance with at most~$8\ell^3+31\ell^2+28\ell$~vertices.
Similarly, if~$G$ contains at least~$\ell+1$ bridges, then~$(G,k)$ is a yes-instance as shown above.
Thus, the entire graph~$G$ can be replaced by a graph with at most
\[(\ell+4)(8\ell^3+31\ell^2+28\ell+1) = 8\ell^4 + 63\ell^3 + 152\ell^2+113\ell+4 \text{ vertices.}\]

Now, we achieve an FPT-time algorithm with respect to~$\ell$ by checking for each subset~$F$ of at most $\ell$~edges whether~$G-F$ is a safe spanning connected subgraph for~$G$.
Note that the number of edges in the constructed graph is in~$\Oh(\ell^8)$.
Hence, we obtain an algorithm with running time~$\Oh({c\ell^8 \choose \ell} \cdot \poly(n)) \subseteq \Oh(\ell^{9\ell}\cdot\poly(n))$.
\end{proof}

\section{Structural Graph Parameters}
\label{sec-graph-params}

In this section, we study two structural graph parameters.
We start by giving an FPT-time algorithm for the parameter treewidth~$\tw$.
Afterwards, we develop a far more intricate FPT-time algorithm for the parameter (vertex-deletion) distance to cluster graphs.
Recall that a cluster graph is a graph in which each connected component is a clique.

%\todo{MB: say something about the treewidth proof here? (depends on space)}

\begin{proposition}
\label{prop-fpt-tw}
	\stue{} parameterized by the treewidth~$\tw$ is solvable in~$\Oh(n \cdot 2^{37 \tw \log(\tw)})$ time.
\end{proposition}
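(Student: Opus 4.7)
The plan is to perform dynamic programming over a nice tree decomposition of $G$ of width $\tw$, as hinted by the comment in the introduction. For each node $t$ with bag $X_t$ and for each admissible \emph{signature} $\sigma$ describing how a partial solution on the edges of $G_t = G[V_t]$ can interact with $X_t$, we store the minimum number of edges of such a partial solution. A signature consists of: (i) a partition $\Pi_s$ of $X_t$ whose blocks are the safely-connected equivalence classes of the partial solution; (ii) a coarsening $\Pi_c$ of $\Pi_s$ whose blocks are the connected components (safely- or unsafely-connected vertices are merged); and (iii) for every vertex of $X_t$ that is still an endpoint of an unsafe edge not yet placed on a cycle, a marker pointing to the block that must eventually close that cycle. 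The final answer is read off at the root from signatures in which $\Pi_c$ has a single block and every marker has been discharged.

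Using a labeling representation of the two partitions -- i.e.\ assigning each vertex a label in $[\tw+1]$ to encode its $\Pi_s$-class together with a second label for its $\Pi_c$-class, plus one marker bit per vertex -- the number of signatures is at most $(\tw+1)^{2(\tw+1)} \cdot 2^{\tw+1} = 2^{O(\tw \log \tw)}$. This matches the promised encoding of size roughly $\tw^{\tw}$ and, combined with the linear number of decomposition nodes and polynomial work per transition, yields the claimed running time of $\mathcal{O}(n \cdot 2^{37 \tw \log(\tw)})$.

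Transitions follow the usual pattern for nice tree decompositions but must be designed with care to respect both edge types. At an \emph{introduce-vertex} node the new vertex forms a singleton class in both partitions. At an \emph{introduce-edge} node carrying a safe edge $\{u,v\}$, we either discard the edge or add it, in which case the $\Pi_s$- and $\Pi_c$-classes of $u$ and $v$ merge and any marker on those classes may be discharged. At an \emph{introduce-edge} node carrying an unsafe edge, adding the edge first merges the $\Pi_c$-classes of $u$ and $v$; if $u$ and $v$ were already in the same $\Pi_c$-class but in different $\Pi_s$-classes, then the added edge together with the existing path forms a cycle through unsafe edges, so the corresponding $\Pi_s$-classes merge and all unsafe markers on that cycle are discharged. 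At a \emph{forget} node the vertex to be forgotten must lie in a $\Pi_c$-block of size at least two and carry no outstanding marker, otherwise the partial solution cannot be extended to a feasible one. At a \emph{join} node we combine two child signatures over the same bag, taking the transitive closure of both partitions under the combined relations and, crucially, upgrading a pair of $\Pi_s$-classes that become linked by two edge-disjoint unsafe paths -- one from each side -- into a single $\Pi_s$-class, again discharging markers on the newly closed cycles.

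The main obstacle is precisely this join step: the definition of safely-connected already builds in the ``two internally edge-disjoint paths that share only safe edges'' condition from the preliminaries, and at a join we must correctly detect when two partial solutions, each contributing a single (possibly unsafe) path between a pair of vertices, together witness safe connectedness between them. Handling this without blowing up the state space beyond $\tw^{O(\tw)}$ forces the signature to track the entire block structure of $\Pi_s$ inside $\Pi_c$ rather than just pairwise connectivity bits; once this is in place the remaining bookkeeping -- verifying that each unsafe edge chosen by the partial solution ends up on a cycle by the time both endpoints are forgotten -- is captured by the marker component of the signature and handled locally in forget and join nodes.
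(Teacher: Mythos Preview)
Your signature is too coarse: recording only the partition $\Pi_s$ into safely-connected classes, its coarsening $\Pi_c$ into connected components, and one bit per bag vertex does \emph{not} determine how a partial solution behaves under future edge additions. Consider a bag $X_t=\{a,b,c,d\}$ in which all four vertices lie in distinct $\Pi_s$-classes but in a single $\Pi_c$-class. This is compatible with (A) an unsafe path $a\!-\!b\!-\!c\!-\!d$ in the quotient graph, and with (B) a star whose centre is a forgotten safely-connected component $e$ with unsafe edges $e\!-\!a,\,e\!-\!b,\,e\!-\!c,\,e\!-\!d$. Now introduce an unsafe edge $\{a,c\}$. In (A) the cycle $a\!-\!b\!-\!c\!-\!a$ closes and $a,b,c$ become safely connected; in (B) the cycle $a\!-\!e\!-\!c\!-\!a$ closes and only $a,c$ (and $e$) become safely connected while $b$ remains a separate $\Pi_s$-class. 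The two partial solutions had identical signatures in your scheme but diverge afterwards, so the DP is unsound. The same phenomenon breaks your join rule: knowing that ``$a$ and $b$ are in the same $\Pi_c$-block on side~1'' does not tell you \emph{which} unsafe path realises that connection, so you cannot decide which $\Pi_s$-classes merge when side~2 contributes a second path.

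What is missing is exactly the information the paper encodes: after contracting each safely-connected component to a single vertex, the resulting graph is a forest (any cycle would force a further contraction), and the \emph{shape of that forest}---including internal nodes that correspond to already-forgotten safely-connected components---must be carried in the state. The paper's backbone $B=(\ell,c,\mathcal{T})$ stores a surjective colouring $c\colon X_t\to[\ell]$ giving the $\Pi_s$-classes together with a forest $\mathcal{T}$ on at most $2|X_t|$ vertices (leaves in $[\ell]$, no degree-$2$ internal vertices) describing the unsafe connections among those classes; Cayley's formula then bounds the number of such forests by $(2z)^{2z}$ with $z=2\tw+2$, which is what keeps the state space at $2^{\mathcal{O}(\tw\log\tw)}$. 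Your marker bits cannot substitute for this tree structure.
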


\begin{proof}
	Let~$(G=(V,S,U),k)$ be an instance of \stue{} and let~$\tw$ be the treewidth of~$G$.
	First, we compute a nice tree decomposition of~$G$ of width at most~$2\tw+1$ in~$\Oh(2^{\Oh(\tw)} \cdot n)$ time \cite{Kor22}.
	We then compute an optimal solution in a bottom-up fashion.
	To this end, let~$t$ be any node in the nice tree decomposition.
	%%% Important definitions 1
	Recall that for a node~$t$ in the nice tree decomposition, by~$X_t$ we denote the bag associated with~$t$ and by~$Y_t$ we denote the set of all forgotten vertices, that is, the union of all bags associated with descendants of~$t$ except for vertices in~$X_t$.
	The main ingredient for our algorithm are \emph{backbones}.
	Intuitively, a backbone describes all connections between vertices in~$X_t$ that are ensured by edges with at least one endpoint in~$Y_t$.
	One can think of them as a compact representation of a partial solution.
	A partial solution~$H$ is a subgraph of~$G$ with vertex set~$Y_t \cup X_t$ and where each edge in~$H$ has at least one endpoint in~$Y_t$.
	In order to keep the number of possible backbones small enough, we do not enumerate all partial solutions but instead contract all vertices that are pairwise safely connected in the partial solution into a single vertex.
	Moreover, (for technical reasons) we further contract all degree-two vertices in this representation unless they represent vertices in~$X_t$.
	We represent these contracted graphs by \emph{backbones}.
	Formally, a backbone~$B = (\ell, c,\mathcal{T})$ is a triple, where~$\ell \in [|X_t|]$ is an integer,~$c \colon X_t \rightarrow [\ell]$ is a surjective vertex-coloring function, and~$\mathcal{T}=\{T_1,T_2,\ldots,T_p\}$ is a set of trees such that
		\begin{enumerate}
			\item the set of leaves of~$T_i$ is a subset of~$[\ell]$ for each~$i\in [p]$,
			\item the tree~$T_i$ contains no degree-2 vertices for each~$i \in [p]$,
			\item two trees~$T_i$ and~$T_j$ share no vertices outside of~$[\ell]$, and
			\item the union of all~$T_i$ is acyclic.
		\end{enumerate}
	Given a backbone~$B$, we denote the graph containing the union of all trees in~$\mathcal{T}$ (and all vertices in~$[\ell]$) by~$G_B$.
	In the aforementioned intuition where~$G_B$ is the contraction of a partial solution, the set~$[\ell]$ can be seen as the set of vertices in~$G_B$ that represent at least one vertex in~$X_t$.
	A vertex~$v$ in~$X_t$ is represented by vertex~$j \in [\ell]$ if and only if~$c(v) = j$.
	The vertices in the trees in~$\mathcal{T}$ represent the vertices in~$Y_t$ and the edges represent the unsafe connections between the different safely connected components.
	Note that any cycle in this representation would imply that all represented vertices belong to the same safely connected component.
	Since we did not further contract the partial solution, we may assume that the union of all~$T_i \in \mathcal{T}$ is acyclic.
	For the same reason, we may assume that no two trees~$T_i$ and~$T_j$ share a vertex outside of~$[\ell]$, as otherwise we would have merged them into a single tree.
	Finally, if a tree~$T_i \in \mathcal{T}$ contains a degree-2 vertex, say~$v$, with neighbors~$u$ and~$w$, then~$v$ is safely connected to both~$u$ and~$w$ once the latter two are safely connected to one another.
	Thus, we can indeed contract degree-two vertices without changing the outcome.
	Furthermore, note that each edge in~$G_B$ represents an unsafe edge between two safely connected components.
	We therefore define all edges in~$G_B$ to be unsafe edges.
	
	We say that a partial solution~$H$ \emph{implements}~$B$ if~$G_B$ is the result of contracting all safely connected components in the partial solution into a single vertex and then contracting all degree-two vertices in this representation unless they represent vertices in~$X_t$.
	More formally, $H$ implements~$B$ if the following conditions are met:
	\begin{enumerate}
		\item $c(u) = c(v)$ for two vertices~$u,v \in X_t$ if and only if~$u$ is safely connected to~$v$ in~$H$,
		\item contracting all safely connected components in~$H$ into single vertices and then contracting degree-two vertices (replacing degree-two vertices by edges unless the vertex represents at least one vertex in~$X_t$) results in the union of the set of trees in~$\mathcal{T}$, and
		\item each vertex in~$Y_t$ is safely connected to~$X_t$.
	\end{enumerate}
	Therein, we say that a vertex is safely connected to a set~$X_t$ of vertices if it safely connected to a vertex in~$X_t$ or unsafely connected to at least two vertices in~$X_t$.
	The last requirement is again necessary to keep the number of possible backbones small enough.
	Note however, that since~$X_t$ is a separator separating~$Y_t$ from the rest of the graph, if a vertex~$v \in Y_t$ is not safely connected to~$X_t$, then a partial solution implementing~$B$ can never lead to a solution as~$v$ cannot be safely connected to the rest of the graph.
	We therefore do not consider such backbones.
	
	We now solve \stue{} using dynamic programming as follows.
	For each node~$t$ in the nice tree decomposition and each possible backbone~$B = (\ell,c,\mathcal{T})$, we store in a table~$\DP$ the minimum size of a partial solution that implements~$B$.
	In order to do so, we first define two operations.
	The first one is the \emph{union}~$B_1 \oplus B_2$ of two backbones~$B_1$ and~$B_2$.
	Informally speaking, it is the backbone implemented by the union of two partial solutions~$H_1$ and~$H_2$ such that~$H_1$ implements~$B_1$ and~$H_2$ implements~$B_2$.
	For the formal definition, let~$B_1 = (\ell_1,c_1,\mathcal{T}_1)$ and let~$B_2 = (\ell_2,c_2,\mathcal{T}_2)$.
	To avoid confusion, we denote the vertices in~$[\ell_1]$ in~$G_{B_1}$ by~$v_1,v_2,\ldots,v_{\ell_1}$ and the vertices in~$[\ell_2]$ in~$G_{B_2}$ by~$u_1,u_2,\ldots,u_{\ell_2}$.
	We define a graph~$G'$ as follows.
	We start with the disjoint union of~$G_{B_1}$ and~$G_{B_2}$.
	We then add a safe edge between~$v_i$ and~$u_j$ if there is a vertex~$w \in X_t$ such that~$c_1(w) = i$ and~$c_2(w) = j$.
	To finish the construction of~$G'$, contract all safely connected components into single vertices and contract any degree-two vertex unless it is a vertex~$v_i$ or~$u_i$ for some~$i$.
	Let~$B = (\ell,c,\mathcal{T}) = B_1 \oplus B_2$ be the backbone such that~$G_B = G'$.
	Formally, let~$c$ map each vertex~$w \in X_t$ to the vertex~$c_1(w)$ or the vertex that~$c_2(w)$ was contracted into.
	Let~$\ell$ be the size of the image of~$c$, and let~$\mathcal{T}$ be the set of trees in~$G'$ with the image of~$c$ as leaves.
	Note that~$G'$ (and therefore all graphs in~$\mathcal{T}$) are indeed acyclic as any cycle would imply a safely connected component and would hence be contracted into a single vertex.
	
	We call the second operation~$\movev$.
	Intuitively, given a backbone~$B = (\ell,c,\mathcal{T})$ for a node~$t$ and a vertex~$v \in X_t$, it returns an equivalent backbone for~$X_t \setminus \{v\}$.
	We distinguish between two cases.
	If~$c(u) = c(v)$ for any vertex~$u \in X_t \setminus \{v\}$, then the operation does nothing.
	Otherwise, it checks whether~$v$ is safely connected to~$X_t \setminus \{v\}$.
	If not, then the operation fails.
	Otherwise, it simply renames the vertex~$c(v)$ to something not contained in~$[\ell]$.
	In terms of the backbone~$B$, it removes~$v$ from the domain of~$c$ (for notational ease we will assume that~$c(v) = \ell$ previously to avoid relabeling), reduces~$\ell$ by one, and modifies~$\mathcal{T}$ as follows.
	Let~$\mathcal{T}' \subseteq \mathcal{T}$ be the set of all trees in~$\mathcal{T}$ that contain~$c(v)$.
	We remove each tree in~$\mathcal{T'}$ from~$\mathcal{T}$ and add a new tree to it.
	The new tree is the result of merging all trees in~$\mathcal{T'}$ as follows.
	We start by introducing a new vertex~$v'$ and in each tree~$T \in \mathcal{T'}$, we replace~$c(v)$ by~$v'$.
	Note that since we assume the graph consisting of the union of all graphs in~$\mathcal{T}$ to be acyclic, the newly created graph is indeed a tree.
	If the newly introduced vertex~$v'$ has degree two, then we contract it.

	\paragraph*{Dynamic program}
	%%% How to find solution
	Note that for the root node~$r$ in the nice tree decomposition, there is only a single valid backbone~$B_r=(1,c,\emptyset)$ where~$c(v)=1$ for the unique vertex~$v$.
	By definition of the dynamic program,~$\DP[r,1,c,\emptyset]$ stores the minimum cost of a subgraph~$H$ of~$G$ with vertex set~$V$ such that each vertex in~$V \setminus \{v\}$ is safely connected to~$v$.
	Thus, if we can correctly compute the whole table~$\DP$, then we can solve \stue{} by checking whether~$\DP[r,1,c,\emptyset]\leq k$.
	
	%%% How to initialize/compute table
	We proceed with the description of how to fill the table~$\DP$.
	To this end, let~$(t,\ell,c,\mathcal{T})$ be an arbitrary entry and suppose that all table entries for descendants of~$t$ have already been computed.
	We make a case distinction over the type of~$t$.
	
	%%% Introduce
	If~$t$ is an introduce node, then let~$v$ be the vertex introduced by~$t$.
	If~$t$ is a leaf in the nice tree decomposition (that is, it has no children), then there is only one valid backbone~$B=(1,c,\emptyset)$ with~$c(v)=1$ for~$t$.
	Hence~$\DP[t,1,c,\emptyset]=0$.
	If~$t$ has a child~$t'$ in the nice tree decomposition, then we note that there is no edge in~${S \cup U}$ between~$v$ and a vertex in~$Y_t$.
	Hence, if~$\ell=1$,~$c(v) = c(u)$ for some~$u \neq v$, or~$c(v)$ is contained in a tree in~$\mathcal{T}$, then the backbone cannot be implemented and we can set~$\DP[t,\ell,c,\mathcal{T}] = \infty$.
	Otherwise, we assume for the sake of notational ease that~${c(v) = \ell}$.
	Then, $\DP[t,\ell,c,\mathcal{T}] = \DP[t',\ell-1,c',\mathcal{T}]$, where~$c'(u) = c(u)$ for all~$u \in X_{t'}$.
	
	%%% Join
	If~$t$ is a join node, then let~$t_1$ and~$t_2$ be the children of~$t$ in the nice tree decomposition.
	We set
	
	$$\DP[t,\ell,c,\mathcal{T}] = \min\limits_{\ell_1,c_1,\mathcal{T}_1,\ell_2,c_2,\mathcal{T}_2} \DP[t_1,\ell_1,c_1,\mathcal{T}_1] + \DP[t_2,\ell_2,c_2,\mathcal{T}_2] + \mathbf{1}_\oplus,$$
	where~$\mathbf{1}_\oplus = 0$ if~$(\ell,c,\mathcal{T}) = (\ell_1,c_1,\mathcal{T}_1) \oplus (\ell_2,c_2,\mathcal{T}_2)$ and~$\mathbf{1}_\oplus = \infty$ otherwise.
	
	%%% Forget
	If~$t$ is a forget node, then let~$v$ be the vertex forgotten by~$t$ and let~$t'$ be the child of~$t$ in the nice tree decomposition.
	Let~${S_v = \{\{u,v\} \in S \mid u \in X_t\}}$ and~${U_v = \{\{u,v\} \in U \mid u \in X_t\}}$ be the set of safe and unsafe edges between~$v$ and vertices in~$X_t$, respectively.
	For subsets~$A \subseteq S_v$, let~$c_A$ be defined as~$c_A(u)=1$ if~$\{u,v\} \in A$ and all other vertices in~$X_t \setminus \{v\}$ are assigned unique consecutive numbers.
	Moreover, let~$\ell_A$ be the size of the image of~$c_A$.
	For subsets~$A \subseteq S_v$ and~$B \subseteq U_v$, let~$\mathcal{T}_B$ be the set of trees where for each edge~$\{u,v\} \in B$, the set~$\mathcal{T}_B$ contains a tree consisting of vertices~$c_A(u)$ and~$c_A(v)$ and an edge between the two vertices.
	We set

	$$\DP[t,\ell,c,\mathcal{T}] = \min\limits_{\ell',c',\mathcal{T'},A,B} \DP[t',\ell',c',\mathcal{T'}] + |A| + |B| + \mathbf{1}^v_\oplus,$$ % + \mathbf{1}_v
	where~$\mathbf{1}^v_\oplus = 0$ if~$(\ell,c,\mathcal{T}) = \movev((\ell',c',\mathcal{T'}) \oplus (\ell_A,c_A,\mathcal{T}_B))$ and~$\mathbf{1}^v_\oplus = \infty$ otherwise.
%	Moreover, $\mathbf{1}_v = 0$ if~$v$ is safely connected to a vertex in~$X_t$ or (unsafely) connected to at least two vertices in~$X_t$.
%	Formally, we test whether i)~$c'(u) = c'(v)$ for some vertex~$u \in X_{t'} \setminus \{v\}$, ii) $|A| \geq 1$, or~iii)~$|B| + C \geq 2$, where~$C$ is the number of trees in~$\mathcal{T'}$ that contain~$c'(v)$.
%	Otherwise~$\mathbf{1}_v = \infty$.
	
	%%% Correctness
	\paragraph*{Correctness of $\oplus$ and~$\movev$}
	It remains to prove the correctness of our algorithm and to analyze the running time.
	To this end, we first prove that the operations~$\oplus$ and~$\movev$ work as intended.
	We start with~$\oplus$ and we show that for any node~$t$ in the nice tree decomposition, if there are edge-disjoint graphs~$H_1$ and~$H_2$ that implement two backbones~$B_1 = (\ell_1,c_1,\mathcal{T}_1)$ and~$B_2 = (\ell_2,c_2,\mathcal{T}_2)$ for~$t$, then~$H = H_1 \cup H_2$ implements~$B = (\ell,c,\mathcal{T}) = B_1 \oplus B_2$.\footnote{We remark that~$B_1$ and~$B_2$ are technically not backbones (at least not for node~$t$) as we will not guarantee that both~$H_1$ and~$H_2$ safely connect all forgotten vertices to~$X_t$ but only that each forgotten vertex is safely connected to~$X_t$ by either~$H_1$ or~$H_2$.} %\todo{reference the three conditions for~$H$ implementing~$B$.}
	First, we show that~$c(u) = c(v)$ for two vertices~$u,v \in X_t$ if and only if~$u$ and~$v$ are safely connected by~$H$.
	To this end, assume first that~$u$ and~$v$ are safely connected by~$H$.
	Then, there are two $u$-$v$-paths~$P_1$ and~$P_2$ such that~$P_1$ and~$P_2$ do not overlap in any unsafe edges.
	By construction, the vertices~$v_{c_1(u)}$ and~$v_{c_1(v)}$ belong to the same safely connected component in~$G_B$.
	Thus,~$c(u) = c(v)$ holds.
	Now assume that~$c(u) = c(v)$ for two vertices~$u,v \in X_t$.
	By construction,~$v_{c_1(u)}$ and~$v_{c_1(v)}$ are contracted into the same vertex in~$G_B$.
	Note that we only contract vertices in~$G_B$ if they are in the same safely connected component.
	All unsafe edges in~$G_B$ correspond by definition to unsafe edges in~$H_1$ or~$H_2$.
	We only add a safe edge to~$G_B$ if a vertex in~$G_{B_1}$ represents a vertex in~$G$ that is also represented by a vertex in~$G_{B_2}$.
	Note that in this case the two respective safely connected components indeed merge into a single safely connected component in~$G_B$.
	Thus, $u$ and~$v$ are safely connected by~$H = H_1 \cup H_2$.
	Second, we show that contracting every safely connected component in~$H$ into a single vertex and removing degree-two vertices results in the union of all trees in~$\mathcal{T}$.
	This follows from the fact that~$H_1$ implements~$B_1$,~$H_2$ implements~$B_2$, $H_1$ and~$H_2$ are edge-disjoint and the union of all trees in~$\mathcal{T}$ is the result of contracting all safely connected components and degree-two vertices in the disjoint union of~$G_{B_1}$ and~$G_{B_2}$.
	Finally, it remains to show that each vertex in~$Y_t$ is safely connected to a vertex in~$X_t$ or unsafely connected to two vertices in~$X_t$.
	This holds trivially, since each vertex in~$Y_t$ is connected to~$X_t$ by~$H_1 \subseteq H$\footnote{Or in the case of a join node, each vertex in~$Y_t$ is contained in~$Y_{t_1}$ or in~$Y_{t_2}$, where~$t_1$ and~$t_2$ are the children of~$t$ and~$B_i$ is a backbone for~$t_i$ for each~$i\in[2]$. Then, each vertex in~$Y_{t_i}$ is safely connected to~$X_t$ in~$H_i$ and therefore also in~$H$.}.
	
	Next, we show that~$\movev$ works as intended, that is, a partial solution~$H$ implements a backbone~$B$ for a node~$t$ with bag~$X_t$ such that~$v \in X_t$ is safely connected to~$X_{t} \setminus \{v\}$ if and only if~$H$ implements the backbone~$\movev(B)$ for the forget node~$t'$ with~$X_{t'} = X_t \setminus \{v\}$ that is a parent of~$t$.
	First, assume that~$H$ implements~$B$ and~$v$ is safely connected to~$X_t$.
	If~$v$ is safely connected to a vertex~$u \in X_t$, then~$c(v) = c(u)$ and~$\movev$ does nothing.
	Hence, $H$ implements~$\movev(B) = B$ by assumption.
	If~$v$ is not safely connected to a vertex in~$X_t$, then, since it is safely connected to~$X_t$, it is unsafely connected to at least two vertices in~$X_t$.
	Hence, when moving~$v$ out of~$X_t$, we create one new safely connected component that does not contain any vertices in~$X_t \setminus \{v\}$.
	All trees that previously had a leaf in this connected component are now merged together as the new connected component does not contain a vertex in~$X_t \setminus \{v\}$.
	Note that this does not change the graph~$G_B$ other than renaming the vertex~$v$ and possibly contracting it if it has degree two.
	As argued above, contracting degree-two vertices does not matter and therefore~$H$ implements~$\movev(B)$.
	Conversely, assume that~$H$ implements~$\movev(B)$.
	By definition, each vertex in~$Y_{t'}$ is safely connected to~$X_{t'}$.
	Since~$v \in Y_{t'}$,~$v$ is safely connected to~$X_{t'} = X_{t} \setminus \{v\}$ and~$\movev(B)$ therefore does not fail.
	It remains to show that~$H$ implements~$B$.
	Again, if~$v$ is safely connected to a vertex~$u \in X_{t'}$, then~$\movev(B) = B$ and the claim holds by assumption.
	Otherwise, $G_B$ and~$G_{\movev(B)}$ only potentially differ in the contraction of a degree-two vertex.
	
	\paragraph*{Correctness of $\DP$}
	With these two ingredients, we can show the correctness of our computation of~$\DP$, that is, that~$\DP$ stores the size of a minimum set~$H$ of edges that implements a given backbone.
	The basic idea is to observe that we only use~$\oplus$ on backbones whose implementations are necessarily edge-disjoint and the fact that we ensure that each forgotten vertex is safely connected to the rest of the graph once all vertices in~$X_t$ are safely connected to one another (which is guaranteed in the root node).
%	Note that we indeed only use~$\oplus$ on edge-disjoint partial solutions since the sets of forgotten vertices for the two children of a join node are disjoint and in the computation of forget nodes, we build the union of one partial solution that does not use any edges incident to~$v$ while the other partial solution only consists of edges incident to~$v$.
	We make an induction proof over the height of the node~$t$ in the nice tree decomposition and proceed with a case distinction over the type of~$t$ to show that~$\DP[t,\ell,c,\mathcal{T}]$ is correctly computed.
	Let~$B = (\ell,c,\mathcal{T})$.
	If~$t$ is an introduce node, then let~$v$ be the vertex introduced by~$t$.
	If~$t$ is a leaf, then there is only one valid backbone whose table entry is correctly computed as argued above.
	If~$t$ is not a leaf, then let~$t'$ be its child in the nice tree decomposition.
	Since there is by definition of~$X_{t'}$ no edge between~$Y_{t'} = Y_t$ and~$v$, we only need to consider backbones in which~$c(v) \neq c(u)$ for each~$u \in X_{t} \setminus \{v\}$ and~$c(v)$ is not a leaf in any tree in~$\mathcal{T}$.
	Thus, each such backbone has a one-to-one correspondence with a backbone for~$t'$ and computing~$\DP[t,\ell,c,\mathcal{T}] = \DP[t',\ell-1,c',\mathcal{T}]$, where~$c'(u) = c(u) < \ell$ for all~$u \in X_{t'}$ and~$c(v) = \ell$ is correct.

	If~$t$ is a join node, then let~$t_1$ and~$t_2$ be its two children.
	Let~$H$ be a partial solution of minimal size that implements~$B$.
	We show that~${\DP[t,\ell,c,\mathcal{T}] = |H|}$.
	Note that since~$Y_{t_1}$ and~$Y_{t_2}$ are disjoint, there is no edge between a vertex in~$Y_1$ and a vertex in~$Y_2$, and~${Y_{t} = Y_{t_1} \cup Y_{t_2}}$.
	Hence,~$H$ can be partitioned into two set~$H_1$ and~$H_2$ such that each edge in~$H_1$ has an endpoint in~$Y_1$ and each edge in~$H_2$ has an endpoint in~$Y_2$.
	Let~${B_1 = (\ell_1,c_1,\mathcal{T}_1)}$ and~${B_2 = (\ell_2,c_2,\mathcal{T}_2)}$ be the two backbones such that~$H_1$ implements~$B_1$ in~$t_1$ and~$H_2$ implements~$B_2$ in~$t_2$.
	Note that~$B_1$ and~$B_2$ are indeed backbones as each forgotten vertex is safely connected to~$X_t$ by~$H$ and hence is also safely connected to~$X_t$ by~$H_1$ or~$H_2$.
	Hence,~${\DP[t,\ell,c,\mathcal{T}] \leq \DP[t_1,\ell_1,c_1,\mathcal{T}_1] + \DP[t_2,\ell_2,c_2,\mathcal{T}_2]}$.
	Moreover,~$B = B_1 \oplus B_2$ and therefore~$\mathbf{1}_\oplus = 0$.
	Since by the induction hypothesis,~$\DP[t_1,\ell_1,c_1,\mathcal{T}_1] = |H_1|$ and~$\DP[t_2,\ell_2,c_2,\mathcal{T}_2] = |H_2|$, we have
	$$\DP[t,\ell,c,\mathcal{T}] = \min\limits_{\ell_1,c_1,\mathcal{T}_1,\ell_2,c_2,\mathcal{T}_2} \DP[t_1,\ell_1,c_1,\mathcal{T}_1] + \DP[t_2,\ell_2,c_2,\mathcal{T}_2] + \mathbf{1}_\oplus \leq |H_1| + |H_2| = |H|.$$
	To see that also $\DP[t,\ell,c,\mathcal{T}] \geq |H|$, note that~$\mathbf{1}_\oplus$ guarantees that~$\DP$ is only updated if the algorithm finds a valid partial solution and~$H$ is by definition the smallest such partial solution.
	
	If~$t$ is a forget node, then let~$t'$ be its child and let~$v$ be the vertex that is forgotten by~$t$.
	Let again~$H$ be a partial solution of minimal size that implements~$B$.
	We show that~$\DP[t,\ell,c,\mathcal{T}] = |H|$.
	Let~$H' \subseteq H$ be the set of all edges in~$H$ that do not have~$v$ as an endpoint.
	Let~$A,B \subseteq H \setminus H'$ be the set of safe and unsafe edges incident to~$v$ in~$H$, respectively.
	Let~$B' = (\ell',c',\mathcal{T}')$ be the backbone such that~$H'$ implements~$B'$ in~$t'$.
	Note that~$\DP[t,\ell,c,\mathcal{T}] \leq \DP[t',\ell',c',\mathcal{T}'] + |A| + |B|$.
	Moreover, we have by construction that~${(\ell,c,\mathcal{T}) = \movev((\ell',c',\mathcal{T'}) \oplus (\ell_A,c_A,\mathcal{T}_B))}$ and therefore~$\mathbf{1}^v_\oplus = 0$.
	Hence by induction hypothesis and the recursive formula for~$\DP$, we have~$\DP[t,\ell,c,\mathcal{T}] \leq |H'| + |A| + |B| = |H|$.
	To see that also $\DP[t,\ell,c,\mathcal{T}] \geq |H|$, note that~$\mathbf{1}^v_\oplus$ guarantees that~$\DP$ is only updated if the algorithm finds a valid partial solution and~$H$ is by definition the smallest such partial solution.
	
	\paragraph*{Running time}
	%%% Running Time
	Finally, we analyze the running time.
	To this end, let~$z = 2 \tw + 2$.
	First, we compute a tree decomposition of width at most~$z - 1$ in~$2^{\Oh(\tw)} \cdot n$ time \cite{Kor22}.
	Next, we transform the tree decomposition into a nice tree decomposition of width at most~$z-1$ in~$\Oh(n \cdot \tw^2)$ time~\cite{Kloks94}.
	We continue by showing that there are at most~$(2z)^{3z+1}$ possible backbones for a node~$t$.
	There are~$z$ possibilities for the value of~$\ell$ and the number of possible coloring functions is at most~$z^z$.
	The number of possible sets of trees is at most~$(2z)^{2z}$ as shown next.
	Since the union of all trees is acyclic and contains no vertices of degree at most two outside of~$[\ell]$, the union of all trees is a forest with at most~$2z$ vertices.
	By Cayley's formula, the number of such forests is at most~$(2z)^{2z}$ \cite{Tak90}.

	Before we can finally analyze the time it takes to fill the table~$\DP$, we note that it takes~$\Oh(z)$ time to perform the operations~$\oplus$ and~$\movev$ as all the graphs involved contain~$\Oh(z)$ vertices and edges.
	The number of entries in~$\DP$ is in~$\Oh(n \cdot (2z)^{3z+1})$ as there are~$\Oh(n)$ nodes in the nice tree decomposition and for each of them there are at most~$(2z)^{3z+1}$ possible backbones.
	The time to compute one table entry depends on the type of~$t$.
	Since the time to compute a table entry for a join node dominates the running time for all other cases, we focus on this case here.
	We iterate over all possible combinations of backbones for the two children and for each combination, we compute the~$\oplus$ operation.
	Hence, the time needed to compute an entry is in~$\Oh((2z)^{6z+3})$.
	Thus, we can fill the whole table~$\DP$ in overall~$\Oh(n \cdot (2z)^{9z+4})$ time.
%	 \todo[inline]{MB: I am pretty sure that we can improve the time for all backbones for one join node to~$2z^{6z}$ since we do the following: For each combination of backbones for children, compute their union and update the respective entry. Is it worth the hassle? Also, this might lead to a situation where forget nodes need to be analyzed...}
	Substituting~${z=2\tw+2}$ and observing that the time to fill the table~$\DP$ dominates the running time of all other computations yields an overall running time of~${\Oh(n \cdot (4\tw+4)^{18\tw + 22}) \subseteq \Oh(n \cdot 2^{37 \tw \log (\tw)})}$.
	This concludes the proof.
\end{proof}

We continue with the algorithm for distance to cluster graphs.

\begin{theorem}
\label{thm-fpt-dist-cluster}
	\stue{} parameterized by the (vertex-deletion) distance~$\ell$ to cluster graphs can be solved in~$f(\ell) \cdot n^3$ time for some computable function~$f$.
\end{theorem}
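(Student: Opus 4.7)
First I would compute in FPT-time a modulator $K \subseteq V(G)$ with $|K|\le\ell$ such that $G' \coloneqq G - K$ is a cluster graph with cliques $\mathcal{C} = \{C_1,\ldots,C_m\}$; standard branching on a $P_3$ yields such a set in $\Oh^*(1.92^{\ell})$ time. Since no edge of $G'$ runs between distinct cliques, any feasible solution $T$ routes every required safe connection between two vertices of $K$ either through $G[K]$ itself or through vertices of some cliques in $\mathcal{C}$. The remaining cliques are merely ``hung off'' the solution via edges to $K$.

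Second, I would isolate the backbone. Let $\mathcal{B} \subseteq \mathcal{C}$ be the set of cliques that an optimum solution $T$ uses non-trivially in order to safely connect $K$; the goal is to show that one may always take $|\mathcal{B}|\le c\cdot\ell$ for a constant $c$. The argument combines an exchange step (a redundant backbone clique may be replaced by cheap attachment edges) with the $2n-4$-bound from Proposition~\ref{ref-ub-solution-ust} applied to the auxiliary graph obtained by contracting each clique in $\mathcal{C}\setminus\mathcal{B}$ to a single vertex. Because $|K|\le\ell$, only $\Oh(\ell)$ super-vertices remain, which limits $|\mathcal{B}|$ to $\Oh(\ell)$. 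A backbone is then described by a partition of $K$ into safely-connected classes, a choice of $\Oh(\ell)$ ``slots'', and for each slot the type (i.e.\ the pattern of safe/unsafe neighborhoods into $K$) of the clique filling it together with its internal spanning pattern. All such descriptions can be enumerated in $f(\ell)$ time.

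Third, for a fixed backbone description I would compute the minimum size of a solution implementing it. The backbone itself has $\Oh(\ell)$ cliques whose optimal realization can be determined by brute force on their type description. For the cliques in $\mathcal{C}\setminus\mathcal{B}$ the task is a global assignment problem: each such clique must choose an attachment pattern (a single safe edge to one vertex of $K$, or two edge-disjoint unsafe edges to two vertices of $K$, together with a safe spanning subgraph inside the clique) while respecting that the resulting structure must safely connect the clique to the already-connected $K$-part. I would model this via \textsc{Maximum Bipartite Matching} when only the attachment $K$-vertex is constrained, and via \textsc{Disjoint Subgraphs} when the clique forces edge-disjoint alternative paths that leave and re-enter the clique through $K$; since the number of terminal pairs is at most $f(\ell)$, the Robertson--Seymour algorithm applies in $f(\ell)\cdot n^{\Oh(1)}$ time.

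The main technical obstacle will be two-fold. First, pinning down a clean definition of ``backbone'' for which the $|\mathcal{B}|=\Oh(\ell)$ bound goes through while the backbone still contains enough information to determine the optimum cost of attaching all remaining cliques independently. Second, handling cliques whose internal safe edges do not span the clique: such a clique may require two edges to $K$ even when $K$ is already safely connected, because the unsafe spanning part inside the clique needs an alternative path through $K$; it is precisely this interplay between internal and external structure that forces the use of \textsc{Disjoint Subgraphs} rather than plain matching in the assignment step.
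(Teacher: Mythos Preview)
Your high-level outline --- compute a modulator, bound the number of ``backbone'' cliques by $\Oh(\ell)$, enumerate backbone descriptions, and realize the rest via matching and \textsc{Disjoint Subgraphs} --- matches the paper's skeleton, but two steps as written do not go through.

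First, the argument for $|\mathcal{B}|=\Oh(\ell)$ is broken. Contracting each clique in $\mathcal{C}\setminus\mathcal{B}$ does \emph{not} leave $\Oh(\ell)$ vertices: there may be $\Omega(n)$ non-backbone cliques, so the auxiliary graph still has unbounded size, and Proposition~\ref{ref-ub-solution-ust} gives you nothing about $|\mathcal{B}|$. The paper's argument is direct and does not use the $2n-4$ bound at all: define a \emph{connecting component} of a minimal backbone $H$ to be a component of $H-K$, and observe that each such component contributes one (possibly unsafe) connection among vertices of $K$, while any cycle of such connections collapses the involved $K$-vertices into a single safe class. Hence a minimal backbone realises at most a ``doubled spanning tree'' on $K$, which has at most $2\ell-2$ connections; so there are at most $2\ell$ connecting components and in particular at most $2\ell$ cliques host them.

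Second, you have the roles of the two subroutines inverted. In the paper, the non-backbone cliques are handled purely combinatorially by classifying each clique as \emph{strong}, \emph{weak}, or \emph{singleton}; the type alone fixes the attachment cost ($|C|$, $|C|{+}1$, or a simple preprocessing), and no matching or disjoint-paths call is needed for them. The hard algorithmic work sits in the backbone cliques: for each connecting component one guesses only the bounded-size \emph{interface} (the at most $2\ell$ vertices $Y$ adjacent to $K$, the incident $K$-edges with their safe/unsafe labels, and the partition of $Y$ into safely-connected parts inside the component), and must then decide which concrete clique of $G'$ can realise this interface. That realisability test is precisely where bipartite matching (in the tractable special case) and \textsc{Disjoint Connected Subgraphs} (to embed the guessed safe-partition into the strong components of the clique, cf.\ Figure~\ref{fig:ex-safecon}) are used; a final bipartite matching then assigns all guessed component-sets to distinct cliques simultaneously. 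Your ``brute force on their type description'' skips this realisation step, which is the technical core of the proof.
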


\begin{proof}
We start by computing a set~$K$ of~$\ell$ vertices in~$\Oh(3^\ell \cdot n^3)$ time such that~${G' = G[V \setminus K]}$ is a cluster graph as follows. %\todo{MB: Cite something here, probably there is a faster algorithm than~$3^\ell$...}
Note that a graph~$G$ is a cluster graph if and only if it does not contain an induced~$P_3$, that is, three vertices~$a,b,$ and~$c$ with~$\{a,b\},\{b,c\} \in E(G)$ and~$\{a,c\} \notin E(G)$.
Hence, we can find an induced~$P_3$ in~$\Oh(n^3)$~time if it exists and then branch on which of the three vertices in the~$P_3$ to include in~$K$.
Note that~$K$ needs to contain at least one of the three vertices and the resulting search tree has therefore depth~$\ell$ and size~$3^\ell$.

We next give a few definitions required to give a more detailed description of the algorithm afterwards.
We call a subgraph of a solution~$H=(V_H,S_H,U_H)$ a \emph{backbone} if it contains all vertices in~$K$ and each pair of vertices in~$V_H$ is safely connected in~$H$.
See \cref{fig:backbone} for an example.
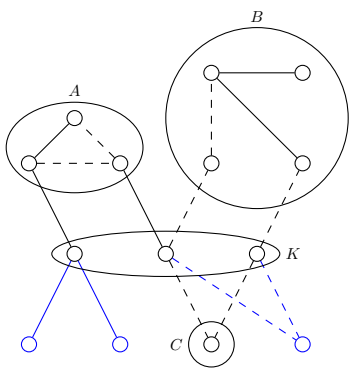
\begin{figure}[t]
	\centering
	\begin{tikzpicture}[scale=0.6,every node/.style={scale=0.6}]
		\node[circle,draw] at (0,0) (a) {};
		\node[circle,draw] at (2,0) (b) {};
		\node[circle,draw] at (4,0) (c) {};
		
		\node[circle,draw] at (-1,2) (d) {};
		\node[circle,draw] at (1,2) (e) {};
		\node[circle,draw] at (0,3) (f) {};
		
		\node[circle,draw] at (3,2) (g) {};
		\node[circle,draw] at (3,4) (h) {};
		\node[circle,draw] at (5,2) (i) {};
		\node[circle,draw] at (5,4) (j) {};
		
		\node[circle,draw=blue] at (-1,-2) (k) {};
		
		\node[circle,draw=blue] at (1,-2) (l) {};
		
		\node[circle,draw] at (3,-2) (m) {};
		
		\node[circle,draw=blue] at (5,-2) (n) {};
		
		\draw[blue] (a) to (k);
		\draw[blue] (a) to (l);
		\draw (a) to (d);
		\draw (d) to (f);
		\draw[dashed] (d) to (e);
		\draw[dashed] (e) to (f);
		\draw (b) to (e);
		\draw[dashed] (b) to (m);
		\draw[dashed] (c) to (m);
		\draw[dashed,blue] (b) to (n);
		\draw[dashed,blue] (c) to (n);
		\draw[dashed] (b) to (g);
		\draw[dashed] (c) to (i);
		\draw[dashed] (g) to (h);
		\draw (i) to (h);
		\draw (h) to (j);
		
		\node[ellipse, draw, minimum width=5cm, minimum height=1cm, label=right:$K$] at(2,0) {};
		\node[ellipse, draw, minimum width=3cm, minimum height=2cm, label=above:$A$] at(0,2.35) {};
		\node[ellipse, draw, minimum width=4cm, minimum height=4cm, label=above:$B$] at(4,3) {};
		\node[ellipse, draw, minimum width=1cm, minimum height=1cm, label=left:$C$] at(3,-2) {};
	\end{tikzpicture}
	\caption{An example of a solution. Safe edges are depicted with solid lines and unsafe edges are depicted with dashed lines. The modulator~$K$ and three connecting components~$A$,~$B$, and~$C$ of a minimal backbone are drawn in black. All other vertices (drawn in blue) are not part of the minimal backbone as all vertices in~$K$ are already safely connected in the black subgraph. The connecting component~$A$ is a cyclic component and~$B$ and~$C$ are usual connecting components.}
	\label{fig:backbone}
\end{figure}%
Given such a backbone~$H=(V_H,S_H,U_H)$, a \emph{connecting component} is a connected component in~$H'=H[V_H\setminus K]$.
We say that a backbone is \emph{minimal} if the removal of any connecting component results in some pair of vertices in~$K$ being not safely connected anymore.
We distinguish between two types of connecting components: \emph{cyclic} and \emph{usual}. %\todo{ist klar, dass sich das auf die Lösung und nciht den Inputgraph bezieht? MB: Ja, da connecting components nur für backbones (Teil der Lösung) definiert sind.}
A cyclic connecting component is a cycle (with some connections to vertices in~$K$).
Note that the number of edges in a cycle equals its number of vertices.
A usual connecting component is a tree.
Its number of edges is one less than its number of vertices but if it contains unsafe edges, then not all vertices are safely connected within the connecting component.
Observe that we can indeed assume that each connecting component is either usual or cyclic as any connecting component that is not a tree contains at least as many edges as vertices.
Hence, we can replace such a connecting component by any cycle through all of its vertices.
Such a cycle exists within any clique of size at least three (any permutation of the vertices results in a cycle and any connecting component inside a clique of size at most two is a tree) and any pair of vertices in a cyclic connecting component is safely connected.

%%%% Clique types

Next, we distinguish between three types of cliques in~$G'$.
To this end, let~$C$ be a connected component in~$G'$, that is, a clique in the cluster graph.
The three types are based on the edges between~$C$ and~$K$ (note that all edges between vertices in~$C$ and the rest of the graph are to vertices in~$K$) and on the connected components within~$C$ if we ignore all unsafe edges.
We call such components \emph{strong components}.
%The first type, we call \emph{strong cliques}.
A clique~$C$ is \emph{strong}, if each strong component in~$C$ contains at least one vertex with a safe edge to a vertex in~$K$.
In this case, we can add~$C$ to the backbone using~$|C|$ (safe) edges (a maximal spanning forest plus for each strong component one edge connecting it to~$K$).
The second type, we call \emph{weak cliques}.
A weak clique is not a strong clique but it is connected to~$K$ by a safe edge or by two unsafe edges with different endpoints in~$C$.
Since weak cliques are not strong cliques, there is some strong component~$L$ in them, which is not connected to~$K$ via only safe edges.
Hence, to safely connect this strong component~$L$ to~$K$, we require at least~$|L|+1$ edges.
Since we require at least~$|C \setminus L|$ edges to connect the remaining vertices, we require at least~$|C|+1$ edges to safely connect~$C$ to the backbone.
For weak cliques,~$|C|+1$ edges are sufficient as we can use a) a safe edge between~$C$ and~$K$ and a Hamiltonian cycle in~$C$ or b) two unsafe edges to different vertices in~$C$ and any Hamiltonian path between the two endpoints within~$C$.
We call the third type of clique \emph{singletons}.
A singleton is only connected to~$K$ by unsafe edges and all of these edges have the same endpoint in~$C$.
Note that if there is at most one unsafe edge between~$K$ and~$C$ (and assuming that~$K \neq \emptyset$), then there cannot be a solution as~$C$ cannot be safely connected to~$K$.
We call them singletons because we can reduce such a clique to a single vertex in a preprocessing step.
Since all connections to~$K$ are through one vertex~$v \in C$, we have to safely connect all vertices in~$C$ to~$v$.
This can either be done via a spanning tree consisting only of safe edges (if such a tree exists) or via any Hamiltonian cycle otherwise\footnote{The Hamiltonian cycle exists if there are at least three vertices in~$C$. If~$|C|=2$ and there is only an unsafe edge between the two vertices, then there cannot be a solution and we can return \emph{no}.}.
Which case applies can be checked in linear time by checking whether there is exactly one strong component in~$C$.

%%%% Guesses

We are now in a position to describe the algorithm.
First, we guess which edges between vertices in~$K$ belong to a solution~\SSS{} and the number~$p$ of connecting components in a minimal backbone~$H$ of~\SSS.
Note that~$p \leq 2\ell$ since any connecting component in~$H$ provides an (unsafe) connection between two vertices in~$K$ and any cycle of unsafe connections implies also safe connections.
Hence in the worst case, each cycle is of length two and we require~$2\ell-2$ connections to implement a ``safe spanning tree'' between the vertices in~$K$. %\todo{MB: Make this more formal? (Now there is a picture for explanation...)}
Next, we guess the structure of~$H$, that is, for each connecting component~$P$ in~$H$, we guess the following (see also~\Cref{fig:guessInterface}). %\todo{lieber enumerate um später besser auf die einzelnen Schritte referenzieren zu können?}
\begin{enumerate}
	\item Which vertices in~$K$ are adjacent to vertices in~$P$ in~\SSS?
	\item How large is the set~$Y$ of vertices in~$P$ that are neighbors to vertices in~$K$ in~\SSS?\footnote{Note that~$|Y| \leq 2\ell$ by the same argument that shows~$p \leq 2\ell$.}
	\item Which edges between~$K$ and~$Y$ are contained in \SSS?\footnote{Note that we guess something like ``there are three vertices~$y_1,y_2,y_3 \in Y$ and the solution contains the safe edge~$\{u,y_1\}$ and the unsafe edges~$\{v,y_2\}, \{v,y_3\}$, and~$\{w,y_3\}$''. However, we do not guess which vertex in the input graph is a vertex in~$Y$. For an example, we refer to \Cref{fig:guessInterface}.}
	\item Which pairs of vertices in~$Y$ are safely connected within~$Y$? (That is, a partition of the vertices in~$Y$)
	\item Is~$P$ a usual or a cyclic component?
\end{enumerate}
Moreover, we guess which connecting components of the minimal backbone are contained in the same clique in~$G'$, that is, we guess a partition of the~$p$ connecting components.
\begin{figure}[t]
	\centering
	\begin{tikzpicture}[scale=0.6,every node/.style={scale=0.6}]
		\def\d{10}
		\node[circle,draw,label=$u$] at(1,4) (u) {};
		\node[circle,draw,label=$v$] at(3,4) (v) {};
		\node[circle,draw,label=$w$] at(5,4) (w) {};
		
		\node[circle,draw,label=right:$a$] at(0,2) (a) {};
		\node[circle,draw,label=right:$b$] at(2,2) (b) {};
		\node[circle,draw,label=left:$c$] at(4,2) (c) {};
		\node[circle,draw,label=left:$d$] at(6,2) (d) {};
		
		\node[circle,draw] at(1,0) (e) {};
		\node[circle,draw] at(3,0) (f) {};
		\node[circle,draw] at(5,0) (g) {};
		
		\draw (u) to (a);
		\draw (u) to (b);
		\draw[dashed] (v) to (b);
		\draw[dashed] (v) to (c);
		\draw[dashed] (v) to (d);
		\draw[dashed] (w) to (b);
		\draw[dashed] (w) to (c);
		\draw[dashed] (w) to (d);
		\draw (a) to (e);
		\draw (b) to (e);
		\draw (c) to (f);
		\draw (e) to (f);
		
		\node[ellipse, draw, minimum width=5.5cm, minimum height=1.5cm, label=right:$K$] at(3,4.1) {};
		\node[ellipse, draw, minimum width=8cm, minimum height=4cm, label=right:$C$] at(3,1) {};

		\node[circle,draw,label=below:$u$] at(\d,4) (u2) {};
		\node[circle,draw,label=$v$] at(\d,2) (v2) {};
		\node[circle,draw,label=$w$] at(\d,0) (w2) {};
		
		\node[circle,draw,label=below:$y_1$] at(\d+2,4) (a2) {};
		\node[circle,draw,label=above:$y_2$] at(\d+2,2) (b2) {};
		\node[circle,draw,label=right:$y_3$] at(\d+2,0) (c2) {};
		
		\draw (u2) to (a2);
		\draw[dashed] (v2) to (b2);
		\draw[dashed] (v2) to (c2);
		\draw[dashed] (w2) to (c2);
		
		\node[ellipse, draw, minimum width=1cm, minimum height=5cm, label=above:$K$] at(\d,2) {};
		\node[ellipse, draw, minimum width=1cm, minimum height=3cm] at(\d+2,3) {};
	\end{tikzpicture}
	\caption{The left side depicts the modulator~$K$ and one clique~$C$ in~$G-K$. Safe edges are depicted with solid lines and unsafe edges are depicted with dashed lines. To reduce visual clutter, we do not show the unsafe edges between two vertices in~$C$. The right side depicts one possible guess for a connecting component. It contains three vertices~$y_1,y_2$, and~$y_3$, some edges between~$K$ and~$\{y_1,y_2,y_3\}$, and a partition of the guessed vertices (in our case~$y_1$ and~$y_2$ are guessed to be safely connected within~$C$). Note that in the graph on the left side, there are many different possibilities to realize the guess on the right side. One possibility is~$y_1=a$,~$y_2=b$ and~$y_3=d$, a second possibility is~$y_1=b$,~$y_2=c$, and~$y_3=d$ and a third possibility~$y_1=a$,~$y_2=c$ and~$y_3=b$.}
	\label{fig:guessInterface}
\end{figure}
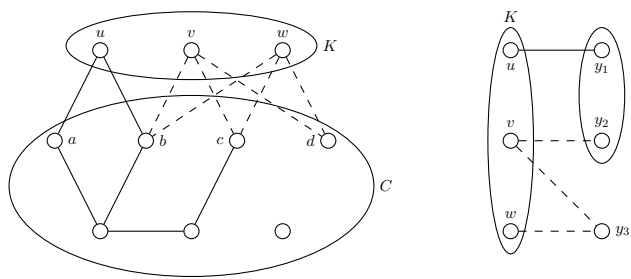%
Finally, we guess for each part of the partition the type of clique that contains the connecting components and how the rest of the clique (that is, all vertices in the clique that are not contained in any connecting component) is connected to the minimal backbone.

%%%% Connection types
We distinguish between the following three types of connections.
To this end, let~$C$ be a clique in~$G'$ that \emph{hosts} at least one connecting component of~$H$ and let~$C' = C \setminus V_H$ be the set of vertices that are not contained in a connecting component.
If there is a connecting component in~$\mathcal{P}$ that contains at least one unsafe edge, then we can replace this edge with a Hamiltonian path through all vertices in~$C'$.
If this is the case or if~$C' = \emptyset$, then we say that~$C'$ is \emph{empty}.
Otherwise, if we can safely connect all vertices in~$C'$ to the backbone using~$|C'|$ edges, then we say that~$C'$ is \emph{efficiently connected} to the backbone.
This is the case if each vertex in~$C'$ is contained in a strong component in~$C$ that contains a) a vertex in some~$P_i \in \mathcal{P}$ or b) a vertex with an incident safe edge to a vertex in~$K$.
If neither of the two cases above applies, then we require at least~$|C'|+1$ edges to connect the vertices in~$C'$ to the backbone.
Note that in this case, we can always find a path through all vertices in~$C'$ and connect the two ends to any vertex in~$C \setminus C'$.
We call this type of connection \emph{inefficient}.

%%%% Proof overview

Observe that if two solutions lead to exactly the same set of guesses, then they have the same number of edges\footnote{Assuming that the solutions are minimal, that is, they do not contain edges whose removal yields a smaller solution.} as the difference between the number of edges and vertices in their minimal backbones and the types of the remaining cliques (both of cliques containing parts of the minimal backbone and those which do not) is the same for both solutions.
As argued above, the difference between the number of edges and vertices in these cliques in a solution is completely determined by their type.

It remains to show how to check whether a guess leads to a solution and to analyze the running time.
Towards the former, we first show how to test whether a given clique~$C$ of a guessed type can host a guessed set~$\mathcal{P} = \{P_1,\ldots,P_c\}$ of connecting components such that the rest of the clique has the guessed connection type.
We can handle most combinations of clique type and connection type with a general approach.
One special case has to be treated differently:
The connection type is efficient, each connecting component~$P_i \in \mathcal{P}$ is a usual component which safely connect all respective vertices in~$Y$ and the clique~$C$ is a weak clique.
First, we describe why this case is different from the others.
Then, we show how to handle this special case and how to handle all other cases.
The difference is that if the connection type is ``$C'$ is empty'' or inefficient, then we can pretty much ignore the connection type as we can always greedily find a solution independent of the connecting component.
If one of the connecting components in~$\mathcal{P}$ is a cyclic component or a usual component with at least one unsafe edge in it, then the connection type is ``$C'$ is empty''.\footnote{We may assume that there is only one cyclic component in~$\mathcal{P}$ as we can always merge two cyclic components in one clique into one cyclic component. We need to consider the special case where the cyclic component contains exactly two vertices with edges to vertices in~$K$ as we need to ensure that there is at least one additional vertex not contained in any of the other connecting components in~$\mathcal{P}$.}
The same is true if the clique~$C$ is a singleton.
If~$C$ is a strong clique, then we can also ignore~$C'$ as we can always greedily find an efficient connection independent of the connecting component.
Only if the connection type is efficient, $\mathcal{P}$ only consists of usual components which safely connects all respective vertices in~$Y$, and~$C$ is a weak clique, then we somehow need to ``hit all strong components in~$C$'' which do not have a safe edge to a vertex in~$K$ using the strong components.

%%%% Special case
%\todo{generell: etwas mehr Übersicht erzuegen, indem ein paar Paragraphen mit textbf... anfangen? MB: I decided for time reasons against this idea. We can come back to this for the next version... (but then we should probably do the same thing in the proof for below upper bound)}
The next step in this proof is to describe our algorithm to check if a clique~$C$ can host a set~${\mathcal{P} = \{P_1,P_2,\ldots,P_c\}}$ of connecting components where each~$P_i$ is a usual connecting component consisting of only safe edges.
%Due to space constraints, we only show this algorithm in the appendix.
Informally speaking, we first show that each~$P_i$ is contained in a different strong component in~$C$.
We then use \textsc{Maximum Bipartite Matching} to first check in which strong components~$V_j$ in~$C$ each~$P_i$ can be contained in.
We then check whether we can assign each~$P_i$ to some~$V_j$ such that~$P_i$ can be contained in~$V_j$ and each strong component~$V_j$ which does not have a safe edge to a vertex in~$K$ contains some~$P_i$.
%Formally, we show the following.
\begin{claim}
\label{claim-special-case}
We can check in~$\Oh(\ell^2 \cdot n^3)$ time whether a clique~$C$ can host a set~${\mathcal{P} = \{P_1,P_2,\ldots,P_c\}}$ of connecting components where each~$P_i$ is a usual connecting component consisting of only safe edges.
\end{claim}

\begin{claimproof}
%We next describe our algorithm to check if a clique~$C$ can host a set~${\mathcal{P} = \{P_1,P_2,\ldots,P_c\}}$ of connecting components where each~$P_i$ is a usual connecting component consisting only of safe edges.
First, we show that we can assume without loss of generality that each~$P_i$ is contained in a different strong component in~$C$. %\todo{muss man hier nicht eine Reihenfelge beachten? Wenn es 3 solche components $A,B,C$ sind, kann es doch sein, dass man erst $A$ und $B$ oder $B$ und $C$ mergen muss und nicht $A$ und $B$?}
Assume towards a contradiction that there exist two connecting components in~$\mathcal{P}$ in a solution that are contained in the same strong component in~$C$.
Observe that we can assume without loss of generality that each vertex in this strong component is contained in some connecting component in~$\mathcal{P}$.
Hence, there exist two connecting components~$P_i,P_j \in \mathcal{P}$ that are contained in the same strong component and there is a safe edge~$e_S$ between a vertex in~$P_i$ and a vertex in~$P_j$.
Let~$A,B \subseteq K$ be the set of vertices in~$K$ that are connected via~$P_i$ and~$P_j$, respectively.
Since we are considering a valid solution, there has to be some safe connection from some vertex~$a \in A$ to some vertex~$b \in B$.
We built a new solution by merging~$P_i$ and~$P_j$ into one connecting component (by including the edge~$e_S$).
%This might require some additional vertices in the strong component of~$P_i$ and~$P_j$ but the resulting component will still be a tree and the difference between the number of vertices in~$C$ and the number of edges between two vertices in~$C$ in the solution has increased by only one.
To compensate for this additional edge, we can remove either the edge between~$a$ and a vertex~$c \in P_i$ or the edge between~$b$ and a vertex~$d \in P_j$.
If both of these edges are safe edges or both are unsafe edges, then we can remove either one as there is still a safe (unsafe) connection via the respective other vertex.
If one of the edges is unsafe and the other one is safe, then we remove the unsafe edge.
For the sake of argument, let us say that the edge~$\{a,c\}$ is unsafe and the edge~$\{b,d\}$ is safe.
Removing the edge~$\{a,c\}$ and merging~$P_i$ and~$P_j$ into one connecting component results in a safe connection between~$a$ and~$c$ via safe connections between~$a$ and~$b$ and between~$d$ and~$c$.

We can now check whether~$C$ can host~$\mathcal{P}$ using the textbook~$\Oh(nm)$-time algorithm for \textsc{Maximum Bipartite Matching} twice as follows.
To this end, let~${V_C = \{v_1,v_2,\ldots,v_c\}}$ be the set of vertices in~$C$ and let~${\mathcal{V} = \{V_1,V_2,\ldots,V_q\}}$ be the set of strong components in~$C$.
We first check whether~$P_i$ can be contained in~$V_j$ (for each combination of~$i$ and~$j$) as follows.
Let~$Y_P$ be the set of vertices in~$P_i$ that have guessed edges to vertices in~$K$.
We built a bipartite graph containing a vertex~$y$ for each vertex~$y \in Y_P$ and a vertex~$v$ for each vertex~$v \in V_j$.
There is an edge between~$y$ and~$v$ if and only if~$v$ has safe edges to all vertices in~$K$ that~$y$ has guessed safe edges to and~$v$ has unsafe edges to all vertices in~$K$ that~$y$ has guessed unsafe edges to.
Note that there is a matching in the constructed graph that matches all vertices in~$Y_P$ if and only if there is an assignment of the vertices in~$Y_P$ to distinct vertices in~$V_j$.
Thus, the existence of such a matching determines whether~$P_i$ can be contained in~$V_j$. 
We then check whether we can assign each~$P_i$ to some~$V_j$ such that~$P_i$ can be contained in~$V_j$ and each strong component~$V_j$ which does not have a safe edge to a vertex in~$K$ contains some~$P_i$.
We built a new bipartite graph with one vertex~$u_i$ for each~$P_i \in \mathcal{P}$ and a vertex~$v_j$ for each~$V_j \in \mathcal{V}$.
If~$|\mathcal{P}| > |\mathcal{V}|$, then there is no solution.
Otherwise, we add~$d = |\mathcal{V}| - |\mathcal{P}|$ vertices~$w_1,w_2,\ldots,w_d$.
We then add an edge~$\{u_i,v_j\}$ if~$V_i$ can contain~$P_j$.
Moreover, we add an edge between~$w_i$ and~$v_j$ if~$V_j$ contains a vertex with a safe edge to some vertex in~$K$.
If there exists a perfect matching between the vertices~$v_i$ and the rest of the vertices, then this matching has a one-to-one correspondence to an assignment of each~$P_i$ to some~$V_j$ such that~$P_i$ can be contained in~$V_j$ and each strong component~$V_j$ which does not have a safe edge to a vertex in~$K$ contains some~$P_i$.
Finally, we analyze the running time.
Since we solve at most~$2\cdot n\ell$ instances of \textsc{Maximum Bipartite Matching} on graphs with~$\Oh(n)$ vertices and~$\Oh(n\ell)$ edges in the first stage and, in the second stage, we solve one instance with~$\Oh(n)$ vertices and~$\Oh(n^2)$ edges, the overall running time is in~$\Oh(\ell^2 \cdot n^3)$. 
\end{claimproof}

%%%% General aglorithm

Next, we present the general algorithm for all remaining cases, that is, the connection type is not efficient, the clique~$C$ is not a weak clique, or the set of connecting components contains an unsafe edge.
Here, we can ignore the connection type and we only need to check whether the guessed set~$\mathcal{P}$ of connecting components can be hosted in a clique~$C$.
Since~$C$ is a clique, any pair of vertices in~$C$ is connected by an edge.
Hence, we can trivially connect any set of~$t$ vertices with unsafe connections using~$t-1$ edges.
We only need to check two~points:
\begin{enumerate}
	\item Is there for each vertex in~$Y$ (the neighbors of vertices in~$K$ in~$C$) a distinct vertex in~$C$ with all the required edges to vertices in~$K$ and
	\item can all sets of vertices that are guessed to be pairwise connected via paths of safe edges inside~$C$ be connected in this way?
\end{enumerate}
Note that the latter point is unfortunately not as simple as checking whether all vertices belong to the same strong component in~$C$ as the example in \cref{fig:ex-safecon} shows.
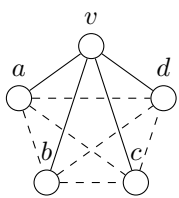
\begin{figure}[t]
	\centering
	\begin{tikzpicture}
		\node[circle,draw, label=$d$] at (0.951,0.309) (d) {};
		\node[circle,draw, label=$v$] at (0,1) (v) {};
		\node[circle,draw, label=$a$] at (-0.951,0.309) (a) {};
		\node[circle,draw, label=$b$] at (-0.587,-0.809) (b) {};
		\node[circle,draw, label=$c$] at (0.587,-0.809) (c) {};
		
		\draw[dashed] (a) -- (b);
		\draw[dashed] (a) -- (c);
		\draw[dashed] (a) -- (d);
		\draw (a) -- (v);
		\draw[dashed] (b) -- (c);
		\draw[dashed] (b) -- (d);
		\draw (b) -- (v);
		\draw[dashed] (c) -- (d);
		\draw (c) -- (v);
		\draw (d) -- (v);
	\end{tikzpicture}
	\caption{A clique with five vertices is shown. Safe edges are depicted with solid lines and unsafe edges are depicted with dashed lines. Suppose we want to connect~$a$ to~$b$ and~$c$ to~$d$ using only safe edges.
	All vertices belong to the same strong component in~$C$ but the only solution is to include all four safe edges in the backbone.
	If each pair of vertices was connected via a safe edge, however, we could connect~$v$ with one of the terminal pairs and directly connect the other terminal pair with a safe edge.
	In this case, we would only use 3 edges.
	Intuitively, we were able to use one edge less by ignoring the connection between the two pairs of terminals.
	Since we assume our guess to be correct, we do not need to connect these terminals inside~$C$ as they will be connected via some paths outside of~$C$.
	Thus, we want each set of vertices that are guessed to be pairwise connected via paths of safe edges inside~$C$ to form their own connected component when considering the graph induced by~$C$ in the solution.}
	\label{fig:ex-safecon}
\end{figure}%
We solve both points as follows.
We built a graph with two vertices~$y,y'$ for each vertex~$y\in Y$ and a vertex~$v$ for each vertex~$v \in C$.
There are edges~$\{y,v\}$ and~$\{y',v\}$ if the following holds.
For each edge~$\{x,y\}$ with~$x \in K$ that is guessed to be in the solution, the edge~$\{x,v\}$ is contained in the input graph~$G$ and the edge~$\{x,v\}$ is safe if and only if the edge~$\{x,y\}$ is guessed to be safe.
Moreover, there is an edge between two vertices~$u,v \in C$ if and only if there is a safe edge between them in~$G$.
Let~$\mathcal{R} = \{R_1,R_2,\ldots,R_r\}$ with~$R_i \subseteq Y$ be a partition of the vertices in~$Y$ according to which vertices are pairwise connected via paths of safe edges.
Let~$R'_i = \{y,y' \mid y \in R_i\}$ be the corresponding vertices in our constructed graph.
As mentioned earlier, we need to consider the special case where one of the connecting components is a cyclic component with exactly two vertices in it as we need to ensure that there is at least one additional vertex that can be included in the cyclic component.
In this case, we add two new vertices~$z,z'$ to the graph, connect each of them to all vertices~$v$ with~$v \in C$, and define the set~$R'_0 = \{z,z'\}$.
We now solve \textsc{Disjoint Connected Subgraphs} where each set~$R'_i$ is one terminal set.
Next, we show that if there is a set of disjoint connected subgraphs each connecting the vertices in one set~$R'_i$, then this corresponds to a solution to both points.
Note that we may again assume that each vertex in~$C$ can only fill the role of one vertex in~$R_i$ and hence selecting any vertex adjacent to~$y$ in the solution gives us a matching between the vertices in~$C$ and the vertices in~$Y$ (any vertex~$y \in R_i$ needs to be connected to at least~$y'$ and hence such a neighbor exists).
Moreover, since the solution is a set of disjoint connected subgraphs and we only included safe edges between vertices in~$C$, we are also ensured that this solution corresponds to a set of connecting components as guessed.
Conversely, if there is a set of disjoint subgraphs in~$C$ that exactly correspond to our set of guesses, then this is also a solution to the instance of \textsc{Disjoint Connected Subgraphs} if we additionally connect each pair of vertices~$y,y'$ to the respective vertex in~$C$.
Thus, we have found a way to check whether~$C$ can host a guessed set~$\mathcal{P}$ of connecting components.

%%%% Combining Results

After checking which cliques could potentially host each set~$\mathcal{P}$ of guessed connecting components, it remains to check whether all of these guesses can be fulfilled at the same time.
To this end, we need to check whether all $q \leq p$ sets~$\mathcal{P}_1,\mathcal{P}_2,\ldots,\mathcal{P}_q$ can be hosted each by a distinct clique.
We do this using the textbook~$\Oh(nm)$-time algorithm for \textsc{Maximum Bipartite Matching} as follows.
We build a bipartite graph with one vertex~$v_i$ for each set~$\mathcal{P}_i$ and a vertex~$u_j$ for each connected component (clique)~$C_j$ in~$G'$.
There is an edge~$\{v_i,u_j\}$ in the graph if and only if~$C_j$ can host~$\mathcal{P}_i$.
It then only remains to check whether there is a matching of size~$q$ as in this case each set~$\mathcal{P}_i$ is matched to a distinct clique in~$G'$.

%%%% Running time
%For space reasons, we defer the analysis of the running time to the appendix.
Finally, we analyze the running time of our algorithm.
%\begin{claim}
%\label{claim-running-time}
The algorithm runs in time~$f(\ell) \cdot n^3$ for some computable function~$f$. %\qedhere
%\end{claim}
%
%\begin{claimproof}
Computing~$K$ takes~$\Oh(3^\ell \cdot n^3)$ time and the time for preprocessing all singleton cliques is linear in~$n+m$.
Guessing the edges in~$K$ takes~$\Oh(\ell^{4\ell})$ time as we may assume that we choose at most~$2\ell$ out of~$\ell^2$ edges.
Guessing the number~$p$ of connecting components takes~$\Oh(\ell)$ time and we may assume that~$p = 2\ell$ as any solution can be viewed as a solution with~$p=2\ell$ and a sufficient number of connecting components that do not connect any vertices in~$K$.
The time to guess the structure of one connecting component takes~$\Oh(2^\ell \cdot \ell \cdot (2\ell)^{4\ell} \cdot B(2\ell))$ time, where~$B(i)$ is the~$i$\textsuperscript{th} Bell number.
Hence, the time to guess the structure of~$p$ connecting components is in~$\Oh((2^\ell \cdot \ell \cdot (2\ell)^{4\ell} \cdot B(2\ell))^{2\ell})$.
Next, guessing which connecting components of the minimal backbone are contained in the same clique~$C$ in~$G'$ and what the type and the connection type of~$C$ are take~$\Oh(B(2\ell) \cdot 9^{2\ell})$ time.
Computing a solution for the special case where~$C$ is a weak clique and the connection type is efficient takes~$\Oh(\ell^2 \cdot n^3)$ time by \Cref{claim-special-case}.
% as in the first stage we solve at most~$2n\ell$ instances of \textsc{Maximum Bipartite Matching} on graphs with~$O(n)$ vertices and~$O(n\ell)$ edges and in the second stage we solve one instance with~$O(n)$ vertices and~$O(n^2)$ edges.
The time for all other cases is~$f(4\ell) \cdot n^3$ for each connecting component for some computable function~$f$ \cite{RS95}.
Since there are at most~$2\ell$ connecting components and all other computations (including the bipartite-matching algorithm in the end) only incur a multiplicative factor in~$\ell$ or an additive summand in~$n$, which is dominated by~$n^3$, the overall running time is~$f'(\ell) \cdot n^3$ for some computable function~$f'$.
%\end{claimproof}
This concludes the proof.
\end{proof}

We mention that each cluster graph is also a co-graph.
Whether the parameter distance to co-graphs also allows for an FPT-time algorithm remains an interesting open question.

\section{Conclusion}
In this work, we started an investigation into the parameterized complexity of \stue.
Our main results are FPT-time algorithms for a below-upper-bound parameter, the treewidth, and the vertex-deletion distance to cluster graphs, respectively.
Moreover, we give a fairly comprehensive dichotomy between parameters that allow for FPT-time algorithms and those that lead to W[1]-hardness (or in most cases even para-NP-hardness); see \Cref{fig-results} for an overview.

Nonetheless, several open questions remain.
First, what is the status of the parameterizations we were not able to resolve; for example does parameterizing by the distance to cographs or interval graphs allow for FPT-time algorithms?
%While we believe that the minimum clique cover and the (vertex-deletion) distance to co-cluster graphs should allow for FPT-time algorithms, we are far less sure about the size of a maximum independent set of the input graph.
Second, is there an XP-time algorithm for \stue{} parameterized by the clique-width? %~$\cw$, that is, an algorithm with running time~$n^{f(\cw)}$ for some computable function solely depending on~$\cw$.
%Third, are there faster FPT-time algorithms for the parameters we studied that do not rely on \textsc{Disjoint Connected Subgraphs}?
%While we verified containment in FPT, the dependency on the parameters used by the black-box algorithms we use are usually too slow for practical applications.

Third, it would be interesting to study the existence of polynomial kernels for the parameters that yield FPT-time algorithms.
Unfortunately, some of these can be excluded due to the close relation to \HC.
In particular, \HC{} (and hence also \stue{}) does not admit a polynomial kernel with respect to the distance to outerplanar graphs unless \nPHC~\cite{BJK13}. %and this immediately excludes polynomial kernels for \stue{} under the same assumption.
Moreover, using the framework of AND-cross compositions, it is not hard to also exclude polynomial kernels %for \stue{} parameterized by 
for the parameters treedepth and bandwidth unless \nPHC:
For treedepth consider the disjoint union of $t$~instances of \HC{} all having $n$~vertices where each edge is made unsafe.
Then, a new vertex~$w$ is added and for each graph~$G_i$ we choose an arbitrary vertex~$u_i\in V(G_i)$ and add a safe edge between~$u_i$ and~$w$.
Finally, we set~$k\coloneqq (n+1)\cdot t$.
For the parameter bandwidth again consider the disjoint union of $t$~instances of \HC{} having $n$~vertices each where each edge is made unsafe.
Next, a path~$W\coloneqq (w_1, \ldots , w_{t})$ of new vertices is added to~$G$.
Each of these edges is safe.
Then, for each graph~$G_i$ we choose an arbitrary vertex~$u_i\in V(G_i)$ and add a safe edge between~$u_i$ and~$w_i$.
Finally, we set~$k\coloneqq (n+2)\cdot t-1$.
However, this still leaves quite a few parameters ready to be investigated in the future.

Last but not least, \stue{} should only be regarded as a first step towards generalizing \textsc{Spanning Tree} to more robust connectivity requirements.
It is interesting to see whether (some of) our positive results can be lifted to the more general problem \textsc{$(p,q)$-Flexible Graph Connectivity}.
Therein, the solution graph should still be~$q$ connected even if up to~$p$ unsafe edges fail. 

\end{document}